\documentclass{article}

% [arxiv_v2: filecontents  stripped, 193 chars]
\usepackage[T1]{fontenc}
\usepackage[utf8]{inputenc}
\usepackage{authblk}

%Set up paper size and margins
%\usepackage[letterpaper, hmargin=1in, top=1.5in, bottom=1.9in, footskip=0.7in]{geometry}
%\usepackage[letterpaper, hmargin=0.85in, top=0.8in, bottom=0.8in, footskip=0.45in]{geometry}
\usepackage[letterpaper, hmargin=1.1in, top=1in, bottom=1.2in, footskip=0.6in]{geometry}

%Inline subsections
\usepackage{titlesec}
\titleformat{\subsection}[runin]{\normalfont\bfseries}{\thesubsection.}{.5em}{}[.]\titlespacing{\subsection}{0pt}{2ex plus .1ex minus .2ex}{.8em}
\titleformat{\subsubsection}[runin]{\normalfont\itshape}{\thesubsubsection.}{.3em}{}[.]\titlespacing{\subsubsection}{0pt}{1ex plus .1ex minus .2ex}{.5em}

%Figure style
\usepackage[labelfont=sc,font=small,labelsep=period]{caption}
\setlength{\intextsep}{3em}
\setlength{\textfloatsep}{3em}

%Change the font
%\usepackage{times}
%\usepackage{charter}
%\usepackage{libertine}

%%%%%%%%%%%%%%%%    Package declarations    %%%%%%%%%%%%%%%%

\usepackage{amsmath} %[intlimits]
\usepackage{amssymb}
\usepackage{amsfonts}
\usepackage{latexsym}
\usepackage{amsthm}
\usepackage{amsxtra}
\usepackage{amscd}
\usepackage{bbm}
\usepackage{mathrsfs}
\usepackage{bm}
\usepackage{xcolor}
%\usepackage{mathtools} %contains useful extra commands, such as \bigtimes

%%%%%%%%%%%%%%%%    Coloured comments    %%%%%%%%%%%%%%%%

\usepackage{graphicx, color}

%Commands for marking text in colour
\definecolor{darkred}{rgb}{0.9,0,0.3}
\definecolor{darkblue}{rgb}{0,0.3,0.9}

%Margin comment

\usepackage{ifthen}
\def\comment#1{\ifthenelse{\isodd{\value{page}}}{\marginpar{\raggedright\scriptsize{\textcolor{darkred}{#1}}}}{\marginpar{\raggedleft\scriptsize{\textcolor{darkred}{#1}}}}}  

%Hyperlinked references and citations
%\definecolor{vdarkred}{rgb}{0.6,0,0.2}
%\definecolor{vdarkblue}{rgb}{0,0.2,0.6}
%\usepackage[pdftex, colorlinks, linkcolor=vdarkblue,citecolor=vdarkred]{hyperref}

%%%%%%%%%%%%%%%%    Citation and key options    %%%%%%%%%%%%%%%%

\usepackage[nottoc,notlof,notlot]{tocbibind}
\usepackage{cite} %Enabling `[1-4]` - style citing
%\usepackage{xspace} %command to insert a space after a latex command

%\usepackage[notref,notcite]{showkeys}

%%%%%%%%%%%%%%%%    Global style commands    %%%%%%%%%%%%%%%%

%\flushbottom
\numberwithin{equation}{section}
\numberwithin{figure}{section}
%\interfootnotelinepenalty=10000 %do not allow LaTeX to break a footnote across multiple pages
% \newcommand{\defrm}[1]{{\fontseries{b}\selectfont#1}}
% \newcommand{\defit}[1]{{\fontseries{bx}\fontshape{it}\selectfont#1}}
 %Normal, broader bold, which also works in italic
 %Reduce spacing in bold fonts

%\swapnumbers

%%%%%%%%%%%%%%%%    Theorem declarations    %%%%%%%%%%%%%%%%

\theoremstyle{plain} %plain, definition, remark
\newtheorem{theorem}{Theorem}[section]
\newtheorem*{theorem*}{Theorem}
\newtheorem{lemma}[theorem]{Lemma}
\newtheorem*{lemma*}{Lemma}

\newtheorem*{corollary*}{Corollary}

\newtheorem*{proposition*}{Proposition}

\newtheorem*{definition*}{Definition}

\newtheorem*{conjecture*}{Conjecture}

\theoremstyle{definition} %plain, definition, remark

\newtheorem*{example*}{Example}
\newtheorem{remark}[theorem]{Remark}
\newtheorem*{remark*}{Remark}

%%%%%%%%%%%%%%%%    Character modifiers    %%%%%%%%%%%%%%%%

 %bold
  %upright
 %sans serif
 %blackboard bold
\renewcommand{\cal}{\mathcal}

 %underline
 %overline

%%%%%%%%%%%%%%%%    Blackboard bold characters    %%%%%%%%%%%%%%%%

%%%%%%%%%%%%%%%%    Special symbols    %%%%%%%%%%%%%%%%

 		%empty space for upper and lower indices

\newcommand*{\deq}{\mathrel{\vcenter{\baselineskip0.65ex \lineskiplimit0pt \hbox{.}\hbox{.}}}=}

 %identity map
%\newcommand{\umat}{\mathbbmss{1}} %unit matrix
\renewcommand{\leq}{\leqslant}
\renewcommand{\geq}{\geqslant}
\renewcommand{\epsilon}{\varepsilon}

%%%%%%%%%%%%%%%%    Parentheses    %%%%%%%%%%%%%%%%

%%%%%%%%%%%%%%%%    Mathematical operators    %%%%%%%%%%%%%%%%

\DeclareMathOperator{\var}{Var}

\usepackage[T1]{fontenc}
\usepackage[utf8]{inputenc}
\usepackage{authblk}

%%%%%%%%%%%%%%%%    Custom commands    %%%%%%%%%%%%%%%%

\title{Convergence in High Probability of the Quantum Diffusion in a Random Band Matrix Model }
\author{Vlad Margarint}

\date{\today}

\begin{document}

\maketitle

\begin{abstract}
We consider Hermitian random band matrices $H$ in $d \geq 1 $ dimensions. The matrix elements $H_{xy},$ indexed by $x, y \in \Lambda \subset \mathbb{Z}^d,$ are independent, uniformly distributed random variable if $|x-y| $ is less than the band width $W,$ and zero otherwise. We update the previous results of the converge of quantum diffusion in a random band matrix model  from convergence of the expectation to convergence in high probability. The result is uniformly in the size $|\Lambda| $ of the matrix.

\end{abstract}

\begin{section}{Introduction}

Random band  matrices $H=\left(H_{xy}\right)_{x,y \in \Gamma}$ represent systems on a large finite graph with a metric. They are the natural intermediate models to study quantum propagation in disordered systems as they interpolate in between the Wigner matrices and Random Schr\"odinger operators. The elements $H_{xy}$ are independent random variables with variance $\sigma_{xy}^2=\mathbb{E}|H_{xy}|^2$ depending on the distance between the two sites. The variance decays with the distance on the scale $W$, called the band width of the matrix $H$. This terminology comes from the simplest model in which the graph is a path on $N$ vertices labelled by $\Gamma=\{1,2,\ldots, N\},$ and the matrix elements $H_{xy}$ are zero if $|x-y| \geq W.$ If $W=O(1)$ we obtain the one-dimensional Anderson type model (see \cite{4}) and if $W=N$ we recover the Wigner matrix. In the general Anderson model, introduced in \cite{4}, a random on-site potential $V$ is added to a deterministic Laplacian on a graph that is typically a regular box in $\mathbb{Z}^d\,.$ For higher dimensional models in which the graph is $\Gamma$ is a box in $\mathbb{Z}^d$, see \cite{5}. 

In \cite{1} it was proved that the quantum dynamics of $d$-dimensional band matrix is given by a superposition of heat kernels up to time scales $t \ll W^{d/3}\,.$ Note that diffusion is expected to hold for $t \sim W^{2}$ for $d=1$ and up to any time for $d \geq 3$ when the thermodynamic limit is taken. The threshold $d/3$ on the exponent is due to technical estimates on Feynman graphs.

The approach of this paper is similar with the one in [1]\,.  We normalize the entries of the matrix such that the rate of quantum jumps is of order one. In contrast with \cite{1}  in this paper double-rooted  Feynman graphs are used to estimate the variance of the quantum diffusion. The main result of this paper is upgrading the previous results on the convergence of expectation of the quantum diffusion from \cite{1} to convergence in high probability. 
\end{section}

\textbf{Acknowledgements} This result is obtained in ETH Z\"urich under the supervision of Prof. Dr. Antti Knowles. The author is grateful to Prof. Dr. Antti Knowles for the careful guiding into understanding the problem.

\begin{section}{Model and main result}

Let $\mathbb{Z}^d$ be the infinite lattice  with the Euclidean norm $|\cdot |_{\mathbb{Z}^d}$ and let $M$ be the number of points situated at distance at most $W$ ($W$ $\geq 2$) from the origin, i.e.\ $$M=M(W)=|\{ x \in \mathbb{Z}^d: 1\leq |\cdot |_{\mathbb{Z}^d} \leq W\}|\,.$$
For simplicity, we avoid working directly on an infinite lattice. Throughout our proof, we consider a $d$-dimensional finite periodic lattice $\Lambda_N \subset\mathbb{Z}^d $ ($d\geq 2$) of linear size $N$ equipped with the Euclidean norm $|\cdot |_{\mathbb{Z}^d}$. Specifically, we take $\Lambda_N$ to be a cube centered around the origin with the side length $N$, i.e.

$$\Lambda_N:=( [-N/2,N/2) \cap \mathbb{Z})^d\,.$$ 
We regard $\Lambda_N$ periodic, i.e.\ we equip it with the periodic addition and periodic distance

$$|x|:=\text{inf}\{|x+N\nu|_{\mathbb{\mathbb{Z}^d}}: \nu \in {Z}^d \}\,.$$
We analyze random matrices $H $ with band width $W$ and with elements $H_{xy}$, where $x$ and $y$ are indices of points in $\Lambda_N$\,. For introducing $H$ we first define a matrix 
$$S_{xy}:=\frac{\bold{1}(1\leq|x-y|\leq W)}{(M-1)}\,.$$ We consider $A=A^*=(A_{xy})$ a Hermitian random matrix whose upper triangular entries ($A_{xy}:x \leq y$) are independent random variables uniformly distributed on the unit circle $\mathbb{S}^1 \subset \mathbb{C}$\,. We define the \emph{random band matrix} $(H_{xy})$ through
\begin{align*}
 H_{xy}&:=\sqrt{S_{xy}}A_{xy}\,.
\end{align*}
Note that $H$ is Hermitian and $|H_{xy}|^{2}=S_{xy}$\,. \\
Throughout our investigation we will use the simplified notation $\sum\limits_{y_1}$  for$\sum\limits_{y_1 \in \Lambda_N}$.\\
Our main quantity is
$$ P(t,x)=|(e^{-itH/2})_{0x}|^{2}\,.$$
The function $P(t,x)$ describes the quantum transition probability of a particle starting in $x_0$ and ending up at position $x$ after time $t$\,.\\ 
Let $\kappa>0$\,. We introduce the macroscopic time and space coordinates $T$ and $X$, which are independent of $W$, and consider the microscopic time and space coordinates

\begin{align}
t\;=\; &W^{d\kappa}T\,, \nonumber\\
x\;=\; &W^{1+d \kappa/2}X\,.\nonumber
\end{align}

Using the definition of the quantum probability and the scaling that we have introduced before, we define the random variable that we are going to investigate by

\begin{equation} 
Y_{T,\kappa, W}(\phi)\equiv Y_T(\phi):=\sum_{x}P(W^{d\kappa}T, x)\phi\left(\frac{x}{W^{1+d\kappa/2}}\right)\,,
\end{equation}
where $\phi \in C_b(\mathbb{R}^d)$ is a test function in $\mathbb{R}^d$\,.\\
Our main result gives an estimate for the variance of the random variable $Y_T(\phi)$ up to time scales $t=O(W^{d\kappa})$ if $\kappa <1/3$\,.

\begin{theorem}

Fix $T_0>0$ and $\kappa$ such that $0< \kappa <1/3\,.$ Choose a real number $\beta$ satisfying $0<\beta<2/3-2\kappa\,.$ Then there exists $C$ $\geq 0$ and $W_0 \geq 0$ depending only on $T_0$, $\kappa$ and $\beta$ such that for all $T \in [0,T_0]\,,$ $W \geq W_0$ and $N \geq W^{1+\frac{d}{6}}$ we have

$$\var(Y_T(\phi)) \leq \frac{C ||\phi||^{2}_\infty}{W^{d\beta}}\,. $$ 
\end{theorem}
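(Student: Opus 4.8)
\medskip
\noindent\textbf{Proof strategy.}

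The plan is to expand $\var(Y_T(\phi))$ into a sum over \emph{linked} double-rooted Feynman graphs and to show that each such graph carries two factors of the non-ladder smallness already controlled in \cite{1}. Since $\phi\in C_b(\R^d)$, expanding the square gives $\absb{\var(Y_T(\phi))}\le\norm{\phi}_\infty^2\sum_{x,y}\absb{\mathrm{Cov}(P(t,x),P(t,y))}$, so it suffices to bound that last sum by $C\,W^{-d\beta}$. I would write $P(t,x)P(t,y)$ as the product of the four propagator entries $(\ee^{-\ii tH/2})_{0x}$, $(\ee^{\ii tH/2})_{x0}$, $(\ee^{-\ii tH/2})_{0y}$, $(\ee^{\ii tH/2})_{y0}$ --- two ``forward'' and two ``backward'', rooted at $0$ and ending, in pairs, at $x$ and at $y$ --- expand each of them in powers of $H$ by the renormalized Duhamel/Taylor expansion of \cite{1}, and take the expectation. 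Since $H_{uv}=\sqrt{S_{uv}}\,A_{uv}$ with the $A_{uv}$ independent and uniform on $\mathbb S^1$, every monomial in the $A_{uv}$ has vanishing expectation unless each $A_{uv}$ is matched by an $\overline{A_{uv}}$, and this presents $\E[P(t,x)P(t,y)]$ as a sum over double-rooted Feynman graphs on $\Lambda_N$, each weighted by time integrals and by $\prod_e S_e^{\,\mathrm{mult}(e)/2}$. The key structural point is that the subtracted term $\E[P(t,x)]\,\E[P(t,y)]$ is precisely the part of this sum coming from graphs that factorize into an $x$-rooted graph and a $y$-rooted graph with disjoint edge sets; hence $\mathrm{Cov}(P(t,x),P(t,y))$ equals the sum over the \emph{linked} graphs, those in which at least one edge is shared between the two halves.

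Next I would truncate each of the four expansions at an order $K=K(W)$ taken to be a sufficiently large power of $W$ (in particular $K\gg t=W^{d\kappa}T$). The contribution of the truncation remainder to the variance can be estimated deterministically, using $\sum_x P(t,x)=1$ together with the Schur-test bound $\norm{H}\le\sqrt{2M}$, and made smaller than $\norm{\phi}_\infty^2 W^{-d\beta}$ --- indeed super-polynomially small in $W$ --- by taking $K$ large enough, uniformly in $N$. For the main term I would follow \cite{1}: resum the self-energy (tadpole) insertions into a renormalized propagator, whose kernel spreads diffusively on scale $W^{1+d\kappa/2}$ and is damped on a time scale of order one, so that the sum over the number of renormalized vertices converges uniformly in $N$ and the problem reduces to a sum over \emph{skeleton} double-rooted graphs. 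Periodicity of $\Lambda_N$ enters only through walks that wind around the torus, and these are negligible once $N\ge W^{1+d/6}$, which exceeds the maximal diffusive spread $W^{1+d\kappa/2}$ since $\kappa<1/3$.

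The core of the argument is then the estimate on linked skeleton double-rooted graphs. I would organize them by how the $x$-half and the $y$-half are joined; the cheapest linkage is a single shared rung connecting two otherwise independent ladders. The decisive point is that one shared rung already forces \emph{both} the $x$-walk and the $y$-walk away from their pure-ladder skeletons in order to meet at the common edge, and each of these two deviations is controlled --- exactly as in the non-ladder estimate of \cite{1} --- by a ``crossing factor'' of size $C\,W^{d\kappa-d/3}=C\,W^{-d(1/3-\kappa)}$; summing over the location of the linkage and over both roots $x,y$ via the near double-stochasticity of $S$ keeps every vertex sum of order one. Additional shared edges or internal crossings only improve the estimate, while the number of skeletons and the residual time integrations cost at most a fixed power of $t$, which is absorbed into $C$ and $W_0$. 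Multiplying the two crossing factors, the linked contribution --- hence $\sum_{x,y}\absb{\mathrm{Cov}(P(t,x),P(t,y))}$ --- is bounded by $C\,\norm{\phi}_\infty^2\,\bigl(W^{-d(1/3-\kappa)}\bigr)^{2}=C\,\norm{\phi}_\infty^2\,W^{-d(2/3-2\kappa)}\le C\,\norm{\phi}_\infty^2\,W^{-d\beta}$ for every $\beta<2/3-2\kappa$ and all $W\ge W_0(T_0,\kappa,\beta)$, uniformly in $T\in[0,T_0]$ and $N\ge W^{1+d/6}$.

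The hard part will be precisely this last estimate: controlling linked double-rooted skeleton graphs up to time $W^{d\kappa}$ with $\kappa<1/3$, the two-particle analogue of the ``$d/3$'' Feynman-graph bound of \cite{1}. One must show that cutting a linked graph along a bridge joining the two halves reduces it to single-rooted pieces already estimated in \cite{1}, with the cut supplying \emph{two} crossing factors (one per root) and the extra time integration costing only a fixed power of $t$; equivalently, that the gain from linking beats the combinatorial entropy of linked graphs. Getting the power of $t$ right --- so that the exponent is $2/3-2\kappa$ and the admissible range is exactly $\kappa<1/3$ --- is the delicate point, and it is here that genuinely double-rooted bookkeeping, rather than a crude squaring of the single-root estimate, is needed.
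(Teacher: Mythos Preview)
Your overall architecture is right and matches the paper: bound $\var(Y_T(\phi))$ by $\norm{\phi}_\infty^2\sum_{y_1,y_2}\abs{\avg{P(t,y_1);P(t,y_2)}}$, expand each propagator, observe that the covariance selects exactly the \emph{connected} (your ``linked'') double-rooted pairings, collapse parallel bridges to skeletons, and aim for the exponent $2/3-2\kappa$. Two points of genuine divergence are worth flagging.

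\textbf{Expansion.} The paper does not truncate a Duhamel/Taylor series. It uses, verbatim from \cite{1}, the exact identity $e^{-itH/2}=\sum_{m\ge0}a_m(t)H^{(m)}$ in non-backtracking powers, with the Chebyshev coefficients $a_m(t)$. This already performs the ``self-energy resummation'' you allude to, and the key analytic inputs are the two facts $\abs{a_n(t)}\le Ct^n/n!$ and $\sum_n\abs{a_n(t)}^2=1+O(M^{-1})$. No large-$K$ cutoff or deterministic remainder estimate is needed.

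\textbf{The core estimate.} Here your proposed mechanism differs from the paper's, and your version is the part I would worry about. You suggest cutting a linked skeleton along a connecting bridge, reducing to two single-rooted pieces from \cite{1}, and collecting one crossing factor $M^{\kappa-1/3}$ from each half. The paper does \emph{not} cut. It treats the double-rooted skeleton $\Sigma$ as a whole and proves two things: (a) an adapted $2/3$-rule $L(\Sigma)\le\tfrac{2}{3}\abs{\Sigma}+\tfrac{2}{3}$ for orbits of the doubled graph, yielding $R(\Sigma)\le C(M/(M-1))^{\abs{l_\Sigma}}M^{-\abs{\Sigma}/3+2/3}$; and (b) a new combinatorial bound (Lemma~3.3 in the paper) on
\[
\sum_{l_\Sigma}\indb{\abs{l_\Sigma}\le M^\mu}\,\absb{a_{n_{11}}\overline{a_{n_{12}}}a_{n_{21}}\overline{a_{n_{22}}}}\;\le\;\frac{C\,M^{\mu(\abs{\Sigma}-2)}}{(\abs{\Sigma}-3)!}\,,
\]
obtained by locating bridges $\sigma_1,\sigma_2$ (or $\sigma_1,\sigma_3,\sigma_4$) that touch the four half-paths $\cal S_1,\dots,\cal S_4$, so that \emph{two} of the $l_\sigma$-sums can be closed via $\sum_n\abs{a_n}^2=O(1)$ instead of one. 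This is where the extra saving (relative to the single-rooted analysis of \cite{1}) actually comes from: four $a$-factors allow two $\ell^2$-summations. The linking bridge does not itself supply ``two crossing factors''; rather, the presence of two roots ensures the existence of enough bridges touching distinct half-paths to run this argument. After (a) and (b) the sum over $\abs{\Sigma}\ge3$ gives $CM^{\mu-1/3}$, the tail $\abs{l_\Sigma}>M^\mu$ is handled as in \cite{1}, the small cases $\abs{\Sigma}\le2$ are done by hand, and one sets $\mu=1/3-\beta$.

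Your cut-and-reduce idea may be salvageable, but it is not obvious: a connecting bridge need not separate the skeleton into two single-rooted pieces of the type in \cite{1} (the halves can remain entangled through further bridges), and the bookkeeping of the four $a$-factors across the cut is exactly the issue that the paper's Lemma~3.3 is designed to handle directly. If you pursue your route, that decoupling of the $l_\Sigma$-sum is the step to make precise.
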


\begin{remark}
Using the estimate that we obtain in Theorem $2.1$ and Chebyshev inequality for the second moment we obtain the convergence in high probability of the random variable $Y_{T}(\phi)\,.$ We think that the same technique can be implemented for a graphical representation with $2p$ directed chains with $p \in \mathbb{N}\,.$  This approach should give similar estimates on the $2p$-th moment of our random variable that we further use in the Chebyshev's inequality to get the desired conclusion.
\end{remark}

\end{section}

\begin{section}{Graphical representation}

In this section we give the exact formula of the  quantity of our analysis and we motivate the graphical representation that we will use in order to compute the upper bound.

\begin{subsection}{Expansion in non-backtracking powers}

First, as in \cite{1} we define $H^{(n)}_{x_0x_n}$ by
\begin{align}
H^{(0)}&\;:=\;\mathbb{Id}\,,\nonumber\\
H^{(1)}&\;:=\;H\,,\nonumber\\
H^{(n)}_{x_0 x_n}&\;:=\;\sum\limits_{x_1,\ldots{},x_{{n}-1}}\left(\prod_{i=0}^{n-2}\bold{1}(x_{i} \neq x_{i+2})\right) H_{x_0x_1},\ldots{},H_{x_{{n}-1}x_{n}} \hspace{3mm}(n \geq 2)\,.\nonumber\\
\end{align}
The following result is proved in \cite{1}\,.
\begin{lemma}
Let $U_k$ be the $k$-th Chebyshev polynomial of the second kind and let $$\alpha_k(t)\;\deq\; \frac{2}{\pi} \int\limits_{-1}^{1}\sqrt{1-\zeta^2}e^{-it\zeta} U_k(\zeta)d\zeta\,.$$ We define the quantity
$a_m(t)\;\deq\;\sum\limits_{k \geq 0}\frac{\alpha_{m+2k}(t)}{(M-1)^{k}}\,.$
We have that
\begin{equation}
 e^{-itH/2}\;=\;\sum\limits_{m\geq 0}a_m(t)H^{(m)}\,. 
\end{equation}

\end{lemma}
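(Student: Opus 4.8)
The plan is to reduce the asserted expansion to a generating‑function identity for the non‑backtracking powers $H^{(m)}$, to recognise that identity as having the same rational shape as the generating function of the Chebyshev polynomials of the second kind, and then to feed in the Chebyshev expansion of $\zeta\mapsto e^{-it\zeta}$ via the functional calculus of the (finite, Hermitian) matrix $H$.

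\textbf{Step 1 (a three‑term recursion).} First I would establish
\[
HH^{(0)}=H^{(1)},\qquad HH^{(1)}=H^{(2)}+\tfrac{M}{M-1}\,H^{(0)},\qquad HH^{(m)}=H^{(m+1)}+H^{(m-1)}\ \ (m\geq 2),
\]
directly from the definition: in $HH^{(m)}=\sum_{x_1}H_{x_0x_1}H^{(m)}_{x_1x_{m+1}}$ the only non‑backtracking constraint present in $H^{(m+1)}$ but not in the product is $\mathbf 1(x_0\neq x_2)$, so $HH^{(m)}-H^{(m+1)}$ is the sum of the configurations with $x_2=x_0$. Performing the $x_1$‑summation there produces the factor $\sum_{x_1}\lvert H_{x_0x_1}\rvert^2\,\mathbf 1(x_1\neq x_3)$, and here the decisive algebraic input is that $\lvert H_{xy}\rvert^2=S_{xy}$ \emph{deterministically}: hence $\sum_{x_1}\lvert H_{x_0x_1}\rvert^2=\tfrac{M}{M-1}$ and $\lvert H_{x_0x_3}\rvert^2H_{x_0x_3}=\tfrac1{M-1}H_{x_0x_3}$, so the correction collapses to $\bigl(\tfrac{M}{M-1}-\tfrac1{M-1}\bigr)H^{(m-1)}=H^{(m-1)}$ for $m\geq 2$, while at $m=1$ the subtracted term is absent, leaving the exceptional coefficient $\tfrac{M}{M-1}$. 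Multiplying the recursion by $z^m$ and summing, all interior terms telescope and one is left with the operator identity
\[
\bigl(1-zH+z^2\bigr)\sum_{m\geq 0}H^{(m)}z^m\;=\;\Bigl(1-\tfrac{z^2}{M-1}\Bigr)\mathrm{Id}\,,
\]
which I would justify for $\lvert z\rvert$ small (equivalently as formal power series, since $H^{(m)}$ has a crude norm bound that makes the left side converge for small $\lvert z\rvert$).

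\textbf{Step 2 (Chebyshev structure and resummation).} Since $\sum_{k\geq 0}U_k(\zeta)z^k=(1-2\zeta z+z^2)^{-1}$, inverting $1-zH+z^2$ (legitimate for small $z$) and applying the functional calculus to $H/2$ gives $\sum_{m}H^{(m)}z^m=\bigl(1-\tfrac{z^2}{M-1}\bigr)\sum_k U_k(H/2)z^k$; matching coefficients yields $H^{(m)}=U_m(H/2)-\tfrac1{M-1}U_{m-2}(H/2)$ for $m\geq2$ (with $H^{(0)}=U_0(H/2)$, $H^{(1)}=U_1(H/2)$), and iterating, $U_m(H/2)=\sum_{j\geq0}(M-1)^{-j}H^{(m-2j)}$ (terminating, with $H^{(\ell)}:=0$ for $\ell<0$). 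On the other hand the $U_k$ are orthonormal on $[-1,1]$ for the weight $\tfrac2\pi\sqrt{1-\zeta^2}$, so $e^{-it\zeta}=\sum_{k\geq0}\alpha_k(t)U_k(\zeta)$ with exactly the $\alpha_k(t)$ of the statement, and — because $e^{-it\zeta}$ is entire, so these coefficients decay superexponentially — this series in fact converges on all of $\C$ to $e^{-it\zeta}$; applying the functional calculus once more gives $e^{-itH/2}=\sum_{k\geq0}\alpha_k(t)U_k(H/2)$. Substituting the expression for $U_k(H/2)$ and reindexing the double sum via $m=k-2j$ produces $e^{-itH/2}=\sum_{m\geq0}\bigl(\sum_{j\geq0}(M-1)^{-j}\alpha_{m+2j}(t)\bigr)H^{(m)}=\sum_{m\geq0}a_m(t)H^{(m)}$, which is the claim.

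\textbf{Where the work is.} The analytic points are light: writing $\zeta=\cos\theta$ and $U_k(\cos\theta)=\sin((k+1)\theta)/\sin\theta$ shows $\alpha_k(t)$ is, up to normalisation, a Bessel coefficient, hence $\lvert\alpha_k(t)\rvert$ decays faster than any geometric rate in $k$; together with the a priori bound $\norm{H}\lesssim\sqrt M$ this makes every series above absolutely convergent and legitimises the rearrangements and the two applications of the functional calculus. The real content — and the only step that demands care — is the non‑backtracking bookkeeping of Step~1: identifying the excluded configurations with the single constraint $\mathbf 1(x_0\neq x_2)$, recognising the residual sum as $H^{(m-1)}$ after the cancellation $\tfrac{M}{M-1}-\tfrac1{M-1}=1$, and treating the boundary cases $m\in\{0,1\}$ separately. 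Everything downstream is the standard passage, via Chebyshev generating functions, from this recursion to the functional‑calculus identity; the result itself is the one recorded in \cite{1}, and the above is the route I would take to reprove it.
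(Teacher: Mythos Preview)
The paper does not prove this lemma at all; it simply imports it from \cite{1} with the sentence ``The following result is proved in \cite{1}.'' Your argument is correct and is precisely the route taken there: the three-term recursion $HH^{(m)}=H^{(m+1)}+H^{(m-1)}$ (with the boundary term $\tfrac{M}{M-1}$ at $m=1$) hinges exactly on the deterministic identity $|H_{xy}|^2=S_{xy}$ that the paper records, and from it the Chebyshev generating-function manipulation and the resummation into $a_m(t)$ go through as you describe, with the superexponential decay of the Bessel-type coefficients $\alpha_k(t)$ guaranteeing convergence on the full spectrum of the finite Hermitian matrix $H$.
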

\noindent
We will use also the abbreviation 
$$\langle X ; Y \rangle \;\deq \; \mathbb{E}XY-\mathbb{E}X\mathbb{E}Y\,.$$
Plugging in the definition of $Y_T(\phi)$ we have 
\begin{align*}
\var(Y_T(\phi))\;=\;\langle Y_T(\phi) ; Y_T(\phi)\rangle\;&=\;\sum_{y_1,y_2}\phi\left(\frac{y_1}{W^{1+d\kappa/2}}\right)\phi\left(\frac{y_2}{W^{1+d\kappa/2}}\right)\langle P(t,y_1) ; P(t,y_2)\rangle\\
\;&\leq\;||\phi||^{2}_{\infty}\sum_{y_1}\sum_{y_2}|\langle P(t,y_1) ; P(t,y_2)\rangle|\,.
\end{align*}
Moreover,
\begin{align*}
\langle P(t,y_1) ; P(t,y_2)\rangle&\;=\;\nonumber\\
&\;=\;\sum\limits_{n_{11},n_{12}\geq 0}\sum\limits_{n_{21},n_{22} \geq 0}a_{n_{11}}(t)\overline{a_{n_{12}}(t)}a_{n_{21}}(t)\overline{a_{n_{22}}(t)}\langle H_{0y_1}^{(n_{11})}H_{y_10}^{(n_{12})} ; H_{0y_2}^{(n_{21})}H_{y_20}^{(n_{22})} \rangle\,.
\end{align*}
We summarize the graphical representation of  $\langle H_{0y_1}^{(n_{11})}H_{y_10}^{(n_{12})} ; H_{0y_2}^{(n_{21})}H_{y_20}^{(n_{22})} \rangle\,.$ 
\end{subsection}

\begin{subsection}{Graphical representation}

We define a graph $\mathcal{L}$ which consists of two rooted directed chains $\mathcal{L}_1$ and $\mathcal{L}_2$ by 
$$\mathcal{L}(n_{11},n_{12},n_{21},n_{22})\;\equiv\; \mathcal{L} \;\deq\;\mathcal{L}_1(n_{11},n_{12})\sqcup\mathcal{L}_2(n_{21},n_{22})\,,$$ 
where $\mathcal{L}_k(n_{k1},n_{k2})$ is a rooted directed chain of length $n_{k1}+n_{k2} \geq 1$ for $k \in \{1,2 \}.$ We denote the set of vertices of the graph $\mathcal{L}$ by $V(\mathcal{L})$ and the set of edges by $E(\mathcal{L})$. Each of the rooted directed chains contains two distinct vertices denoted by $r(\mathcal{L}_k)$ ($\textit{root}$) and 
$s(\mathcal{L}_k)$ ($\textit{summit}$) defined as the unique vertex such that the path $r(\mathcal{L}_k)\rightarrow s(\mathcal{L}_k)$ has length $n_{k1}$. Note that if $n_{k1}=0$ or $n_{k2}=0$ then $r(\mathcal{L}_k)=s(\mathcal{L}_k)$\,. Using the orientation of the edges, for each $e \in E(\mathcal{L})$ we denote the vertex $a(e) \in V(\mathcal{L})$ as predecessor and the vertex $b(e) \in V(\mathcal{L})$ as successor (see Figure $2.1$). Similarly, for each vertex $i \in V(\mathcal{L})$\,, we denote the adjacent vertices, $a(i)$ and $b(i)$ as the predecessor and the successor of $i$ (see Figure $2.2$). The root and the summit are drawn using white dots and all other vertices using black dots. Hence, the set of vertices can be split as $V(\mathcal{L})=V_{w}(\mathcal{L})\sqcup V_{b}(\mathcal{L}),$ where the subscript $w$ stands for the white vertices and $b$ for the black vertices. 

Each vertex $i \in V(\mathcal{L})$ carries a $\textit{label}$ $x_i \in \Lambda_N$\,. The labels $\bold{x}=(x_i)_{i \in V(\mathcal{L})}$ can be split according to the needs, e.g.\ 
$\bold{x}=(\bold{x}_1, \bold{x}_2),$ where $\bold{x}_k:=(\bold{x}_i)_{i \in V(\mathcal{L}_k)},\; k \in \{1,2 \},$ or $\bold{x}=(\bold{x}_b,\bold{x}_w),\; \bold{x}_b:=(x_i)_{i \in V_b(\mathcal{L})} $ and $\bold{x}_w:=(x_i)_{i \in V_w(\mathcal{L})}\,.$ 
\begin{figure}[ht]
\begin{center}
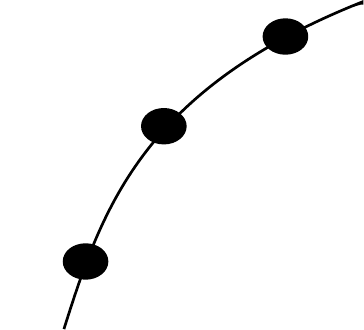
\caption{The predecessor vertex $a(e)$ and the successor vertex $b(e)$ of the edge $e$\,.}
\end{center}
\end{figure}
\begin{figure}[ht]
\begin{center}
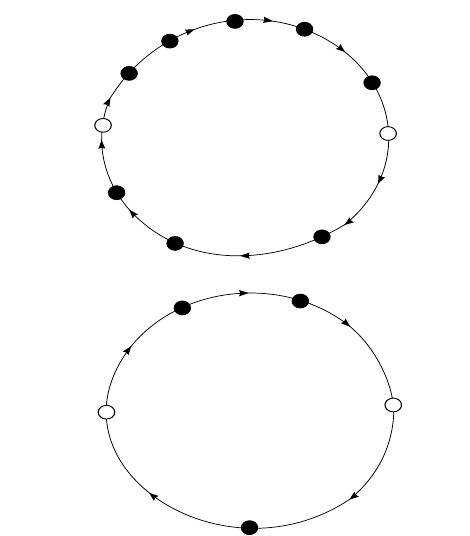
\caption{The graphical representation of the rooted directed chains.}
\end{center}
\end{figure}

For each configuration of labels $\bold{x}$ we assign a \emph{lumping} $\Gamma =\Gamma(\bold{x})$ of the set of edges $E(\mathcal{L})$ as in \cite{1}\,. A \emph{lumping} is an equivalence relation on $E(\mathcal{L})$\,. We use the notation $\Gamma=\{\gamma\}_{\gamma \in \Gamma}$ where $\gamma \in \Gamma$ is a \emph{lump}, i.e. an equivalence class of $\Gamma$\,. The lumping $\Gamma= \Gamma(\bold{x})$ associated with the labels $\bold{x}$ is given by the  equivalence relation
\begin{equation}
 e \;\sim\; e' \;\Leftrightarrow\; \{x_{a(e)}, x_{b(e)} \}\;=\;\{x_{a(e')}, x_{b(e')} \}\,.
\end{equation}
\noindent
The summation over $\bold{x}$ is performed with respect to the indicator function 
\begin{align*}
Q_{y_1,y_2}(\bold{x})\;\deq\;\bold{1}(x_{r(\mathcal{L}_1)}=0)\bold{1}(x_{r(\mathcal{L}_2)}=0)\bold{1}(x_{s(\mathcal{L}_1)}=y_1)\bold{1}(x_{s(\mathcal{L}_2)}=y_2)\prod\limits_{i \in V_b(\mathcal{L})}\bold{1}(x_{a(i)}\neq x_{b(i)})\,.
\end{align*}
Throughout the proof we will use the notation
$$x_{e}\;=\;(x_{a(e)}, x_{b(e)})\,.$$
Using the graph $\mathcal{L}$ we 
may now write the covariance as
\begin{equation}
 \langle H_{0y_1}^{(n_{11})}H_{y_10}^{(n_{12})} ; H_{0y_2}^{(n_{21})}H_{y_20}^{(n_{22})}\rangle \;=\;\sum\limits_{\bold{x} \in \Lambda_N^{V(\mathcal{L})}}Q_{y_1,y_2}(\bold{x})A(\bold{x})\,,
\end{equation}
where
\begin{align}
A(\bold{x})\;=\;\mathbb{E}\prod\limits_{e \in E(\mathcal{L})}H_{x_e}-\mathbb{E}\prod\limits_{e \in E(\mathcal{L}_1)}H_{x_e}\mathbb{E}\prod\limits_{e \in E(\mathcal{L}_2)}H_{x_e}\,.
\end{align}
\noindent
We further define the $\textit{value}$ of the lumping $\Gamma$ by
$$V_{y_1, y_2}(\Gamma)\;\deq\;\sum\limits_{\bold{x}}\bold{1}(\Gamma(\bold{x})\;=\;\Gamma)Q_{y_1,y_2}(\bold{x})A(\bold{x})\,.$$
Let $\mathfrak{P}_{c}(E(\mathcal{L}))$ be the  set of connected even lumpings, i.e. the set of all lumpings $\Gamma$ for which each lump $\gamma \in \Gamma$ has even size and there exists $\gamma \in \Gamma$ such that $\gamma \cap E(\mathcal{L}_k) \neq \emptyset\,,$ for $k \in \{1,2\}\,.$

Using that $\mathbb{E}H_{xy}=0$\,, it is not hard to see that the graphical representation of the variance yields to the following result (for further details, see \cite{3})\,.
\begin{lemma}
We have that
\begin{align*} 
\langle H_{0y_1}^{(n_{11})}H_{y_10}^{(n_{12})} ; H_{0y_2}^{(n_{21})}H_{y_20}^{(n_{22})}\rangle \;=\; \sum\limits_{\Gamma \in \mathfrak{P}_{c}(E(\mathcal{L}))}V_{y_1,y_2}(\Gamma)\,.
\end{align*}
\end{lemma}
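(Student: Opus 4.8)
The plan is to start from the representation of the covariance as a sum over label configurations,
\[
\langle H_{0y_1}^{(n_{11})}H_{y_10}^{(n_{12})} ; H_{0y_2}^{(n_{21})}H_{y_20}^{(n_{22})}\rangle \;=\; \sum_{\bold{x} \in \Lambda_N^{V(\mathcal{L})}} Q_{y_1,y_2}(\bold{x})\,A(\bold{x})\,,
\]
and to reorganise the right-hand side according to the lumping induced by $\bold{x}$. Since every configuration $\bold{x}$ determines a unique lumping $\Gamma(\bold{x})$ of $E(\mathcal{L})$ through the equivalence $e \sim e' \Leftrightarrow \{x_{a(e)},x_{b(e)}\} = \{x_{a(e')},x_{b(e')}\}$, grouping the terms by the value of $\Gamma(\bold{x})$ gives at once $\langle H_{0y_1}^{(n_{11})}H_{y_10}^{(n_{12})} ; H_{0y_2}^{(n_{21})}H_{y_20}^{(n_{22})}\rangle = \sum_{\Gamma} V_{y_1,y_2}(\Gamma)$, the sum ranging over \emph{all} lumpings $\Gamma$ of $E(\mathcal{L})$. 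It then remains to prove that $V_{y_1,y_2}(\Gamma) = 0$ whenever $\Gamma \notin \mathfrak{P}_c(E(\mathcal{L}))$, i.e.\ whenever $\Gamma$ has a lump of odd size or no lump of $\Gamma$ meets both $E(\mathcal{L}_1)$ and $E(\mathcal{L}_2)$.

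The two facts I would rely on are: (a) $H_{x_e}$ depends on $\bold{x}$ only through the unordered pair $\{x_{a(e)},x_{b(e)}\}$, and entries attached to distinct unordered pairs are independent, so for fixed $\bold{x}$ the expectation $\mathbb{E}\prod_{e\in E(\mathcal{L})} H_{x_e}$ factorises as a product over the lumps of $\Gamma(\bold{x})$; and (b) writing $H_{uv} = \sqrt{S_{uv}}\,A_{uv}$ with $A_{uv}$ uniform on $\mathbb{S}^1$, a lump carrying the pair $\{u,v\}$ with $p$ of its edges oriented $u\to v$ and $q$ oriented $v\to u$ contributes the factor $S_{uv}^{(p+q)/2}\,\mathbb{E}[A_{uv}^{p}\overline{A_{uv}}^{q}] = S_{uv}^{p}\,\bold{1}(p=q)$, because $\mathbb{E}[A_{uv}^{k}] = 0$ for every $k\neq 0$ — this is the only place the hypothesis $\mathbb{E}H_{xy}=0$ enters. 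In particular this factor vanishes unless $p=q$, hence vanishes as soon as the lump has odd size; configurations with $x_{a(e)} = x_{b(e)}$ for some $e$ contribute nothing at all, since $S_{xx}=0$.

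These facts dispose of the two cases. If $\Gamma$ contains a lump of odd size, then for every $\bold{x}$ with $\Gamma(\bold{x}) = \Gamma$ we get $\mathbb{E}\prod_{e\in E(\mathcal{L})}H_{x_e} = 0$, and likewise for $\mathbb{E}\prod_{e\in E(\mathcal{L}_1)}H_{x_e}$ and $\mathbb{E}\prod_{e\in E(\mathcal{L}_2)}H_{x_e}$, so $A(\bold{x}) = 0$ on $\{\bold{x}: \Gamma(\bold{x})=\Gamma\}$ and hence $V_{y_1,y_2}(\Gamma) = 0$. If instead $\Gamma$ is such that every lump lies entirely in $E(\mathcal{L}_1)$ or entirely in $E(\mathcal{L}_2)$ — which, for a lumping failing the connectivity requirement, is exactly the remaining possibility — then directly from the definition of $\Gamma(\bold{x})$ every edge of $\mathcal{L}_1$ carries an unordered label pair different from that of every edge of $\mathcal{L}_2$ (two such edges sharing a pair would be equivalent, so their lump would meet both chains); consequently the families $\{H_{x_e}:e\in E(\mathcal{L}_1)\}$ and $\{H_{x_e}:e\in E(\mathcal{L}_2)\}$ are independent, so $\mathbb{E}\prod_{e\in E(\mathcal{L})}H_{x_e} = \mathbb{E}\prod_{e\in E(\mathcal{L}_1)}H_{x_e}\cdot\mathbb{E}\prod_{e\in E(\mathcal{L}_2)}H_{x_e}$, i.e.\ $A(\bold{x}) = 0$ once more, whence $V_{y_1,y_2}(\Gamma) = 0$. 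Deleting these vanishing terms from the sum over all $\Gamma$ leaves exactly $\sum_{\Gamma\in\mathfrak{P}_c(E(\mathcal{L}))}V_{y_1,y_2}(\Gamma)$, which is the claim.

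I do not expect a genuine obstacle here: the statement is a bookkeeping identity, and the only points deserving care are the legitimacy of the factorisation in (a) — which rests on $H_{x_e}$ seeing only the unordered label pair together with the independence of entries on distinct pairs — and the observation that the subtracted term $\mathbb{E}\prod_{E(\mathcal{L}_1)}H_{x_e}\,\mathbb{E}\prod_{E(\mathcal{L}_2)}H_{x_e}$ built into $A(\bold{x})$ is precisely what cancels the disconnected even lumpings, while playing no role for the odd ones, which already die from (b). Making the matching of the two products in the disconnected case fully explicit is immediate from independence, so no further work is needed.
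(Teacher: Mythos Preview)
Your proposal is correct and follows exactly the route the paper gestures at (the paper gives no argument beyond ``using that $\mathbb{E}H_{xy}=0$, it is not hard to see\ldots'' and a reference to \cite{3}); you have simply supplied the natural details. One small slip: in the odd-lump case you assert that both $\mathbb{E}\prod_{e\in E(\mathcal{L}_1)}H_{x_e}$ and $\mathbb{E}\prod_{e\in E(\mathcal{L}_2)}H_{x_e}$ vanish, but in fact only one of them is guaranteed to (the odd lump $\gamma$ splits as $\gamma\cap E(\mathcal{L}_1)$ and $\gamma\cap E(\mathcal{L}_2)$, and since $|\gamma|$ is odd at least one of these pieces has odd size, killing the corresponding expectation); this is enough, since the subtracted term in $A(\bold{x})$ is their product.
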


\noindent
We define the set of all connected pairings 
$$\mathfrak{M}_c\;\deq\;\bigsqcup\limits_{n_{11}, n_{12}, n_{21}, n_{22}}\{ \Pi \in \mathfrak{P}_c(E(\mathcal{L}(n_{11}, n_{12}, n_{21}, n_{22})))\; : \; |\pi|=2\,,\hspace{2mm} \forall \pi \in \Pi\}\,.$$
We call the lumps $\pi \in \Pi$ of a pairing $\Pi$ $\textit{bridges}$.
Moreover, with each pairing $\Pi \in \mathfrak{M}_c$ we associate its underlying graph $\mathcal{L}(\Pi)$, and regard $n_{11}(\Pi)$ and $n_{12}(\Pi)$, $n_{21}(\Pi)$ and $n_{22}(\Pi)$ as functions on $\mathfrak{M}_c $ in self-explanatory notation. We abbreviate $V(\Pi)=V(\mathcal{L}(\Pi))$ and $E(\Pi)=E(\mathcal{L}(\Pi))$. We refer to $V(\Pi)$ as the set of vertices of $\Pi$ and to $E(\Pi)$ as the set of edges of $\Pi$\,.\\
Let us define the indicator function
\begin{equation}
J_{ \{ e,e' \} }(\bold{x})\;\deq\;\bold{1}(x_{a(e)}=x_{b(e')})\bold{1}(x_{a'(e)}=x_{b(e)})\,.
\end{equation}
Using the same reasoning as in Section $4$ of \cite{3} and Equation $4.14$ of \cite{3}, we obtain the following bound.
\begin{lemma}
We have\begin{align}
|\langle P(t,y_1); P(t,y_2) \rangle|
&\;\leq\;\sum\limits_{\Pi \in \mathfrak{M}_c}|a_{n_{11}(\Pi)}(t)\overline{a_{n_{12}(\Pi)}(t)}a_{n_{21}(\Pi)}(t)\overline{a_{n_{22}(\Pi)}(t)}|\sum_{\bold{x}}Q_{y_1,y_2}(\bold{x})\prod\limits_{\{e,e'\} \in \Pi}S_{x_e}\prod\limits_{\pi \in \Pi}J_{\pi}(\bold{x})\,. 
\end{align}
\end{lemma}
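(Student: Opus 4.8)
The plan is to substitute the identity of the previous lemma into the expansion of $\langle P(t,y_1);P(t,y_2)\rangle$ obtained earlier in this section, take absolute values, and then dominate every connected even lumping — term by term and with no combinatorial loss — by the connected pairings that refine it. Concretely, feeding the previous lemma into the quadruple series for $\langle P;P\rangle$ and applying the triangle inequality (the rearrangement of the series being legitimate exactly as in \cite{1}, as already used in Section~3.1) gives
$$|\langle P(t,y_1);P(t,y_2)\rangle|\;\le\;\sum_{n_{11},n_{12},n_{21},n_{22}}\bigl|a_{n_{11}}(t)\overline{a_{n_{12}}(t)}a_{n_{21}}(t)\overline{a_{n_{22}}(t)}\bigr|\sum_{\Gamma\in\mathfrak{P}_c(E(\mathcal{L}))}|V_{y_1,y_2}(\Gamma)|\,,$$
and from the definition of the value, $|V_{y_1,y_2}(\Gamma)|\le\sum_{\bold{x}}\bold{1}(\Gamma(\bold{x})=\Gamma)\,Q_{y_1,y_2}(\bold{x})\,|A(\bold{x})|$, so everything reduces to estimating $|A(\bold{x})|$ and then repackaging the $\bold{x}$-sum in terms of pairings.

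For the weight $|A(\bold{x})|$, I would write $H_{x_e}=\sqrt{S_{x_e}}\,A_{x_e}$ with $|A_{x_e}|=1$ and $A_{yx}=\overline{A_{xy}}$, group the product over edges according to the lumps of $\Gamma(\bold{x})$ (distinct lumps carry distinct, hence independent, circular phases), and use $\mathbb{E}[A_{uv}^{\,j}]=\bold{1}(j=0)$. Calling a lump \emph{balanced} when it has equally many edges in each of its two orientations, one then finds
$$\mathbb{E}\prod_{e\in E(\mathcal{L})}H_{x_e}\;=\;\Bigl(\prod_{e\in E(\mathcal{L})}\sqrt{S_{x_e}}\Bigr)\,\bold{1}\bigl(\text{every lump of }\Gamma(\bold{x})\text{ is balanced}\bigr)\,,$$
and, in the same way, $\mathbb{E}\prod_{e\in E(\mathcal{L}_1)}H_{x_e}\,\mathbb{E}\prod_{e\in E(\mathcal{L}_2)}H_{x_e}$ equals $\prod_{e}\sqrt{S_{x_e}}$ times the indicator that every lump is balanced \emph{within each of $\mathcal{L}_1$ and $\mathcal{L}_2$ separately}. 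Since balance within each chain implies overall balance, these two indicators are ordered, whence
$$|A(\bold{x})|\;\le\;\prod_{e\in E(\mathcal{L})}\sqrt{S_{x_e}}\,;$$
moreover $A(\bold{x})\neq0$ forces every lump of $\Gamma(\bold{x})$ to be balanced while at least one lump fails to be balanced within the two chains, and such a lump necessarily meets both $\mathcal{L}_1$ and $\mathcal{L}_2$ (a lump contained in a single chain is balanced if and only if it is balanced within that chain). In particular $\Gamma(\bold{x})\in\mathfrak{P}_c(E(\mathcal{L}))$, consistently with the previous lemma.

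The heart of the proof is the refinement of lumps into bridges. Fix $\bold{x}$ with $Q_{y_1,y_2}(\bold{x})=1$ and $A(\bold{x})\neq0$. Every lump $\gamma$, with unordered label pair $\{u,v\}$ — necessarily $u\neq v$, since otherwise $S_{x_e}=0$ — then has as many forward ($u\to v$) as backward ($v\to u$) edges, so its edge set decomposes into bridges, each one forward edge together with one backward edge. Among all such decompositions I would fix one, $\Psi(\bold{x})$, in which at least one bridge meets both $\mathcal{L}_1$ and $\mathcal{L}_2$: this is possible because the spanning lump furnished above has opposite orientation imbalances on the two chains, hence carries a forward edge on one chain and a backward edge on the other, and these two may be paired. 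Then $\Psi(\bold{x})$ is a pairing with underlying graph $\mathcal{L}$, it lies in $\mathfrak{M}_c$, and it satisfies two identities that are immediate from the definitions: the weight identity $\prod_{e\in E(\mathcal{L})}\sqrt{S_{x_e}}=\prod_{\{e,e'\}\in\Psi(\bold{x})}S_{x_e}$ (the two edges of a bridge both carry the pair $\{u,v\}$, contributing one factor $S_{uv}$), and $J_{\{e,e'\}}(\bold{x})=1$ for every bridge $\{e,e'\}\in\Psi(\bold{x})$ (a forward edge $e$ and a backward edge $e'$ of the same lump satisfy $x_{a(e)}=x_{b(e')}$ and $x_{a(e')}=x_{b(e)}$).

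Assembling the above, for a fixed underlying graph $\mathcal{L}$ and using that each $\bold{x}$ has exactly one $\Gamma(\bold{x})$,
\begin{align*}
\sum_{\Gamma\in\mathfrak{P}_c(E(\mathcal{L}))}|V_{y_1,y_2}(\Gamma)|
&\;\le\;\sum_{\bold{x}:\,A(\bold{x})\neq0}Q_{y_1,y_2}(\bold{x})\prod_{\{e,e'\}\in\Psi(\bold{x})}S_{x_e}\prod_{\pi\in\Psi(\bold{x})}J_\pi(\bold{x})\\
&\;=\;\sum_{\substack{\Pi\in\mathfrak{M}_c\\ \mathcal{L}(\Pi)=\mathcal{L}}}\;\sum_{\substack{\bold{x}:\,A(\bold{x})\neq0\\ \Psi(\bold{x})=\Pi}}Q_{y_1,y_2}(\bold{x})\prod_{\{e,e'\}\in\Pi}S_{x_e}\prod_{\pi\in\Pi}J_\pi(\bold{x})\\
&\;\le\;\sum_{\substack{\Pi\in\mathfrak{M}_c\\ \mathcal{L}(\Pi)=\mathcal{L}}}\;\sum_{\bold{x}}Q_{y_1,y_2}(\bold{x})\prod_{\{e,e'\}\in\Pi}S_{x_e}\prod_{\pi\in\Pi}J_\pi(\bold{x})\,,
\end{align*}
the last line only dropping the constraint $\Psi(\bold{x})=\Pi$ from a sum of nonnegative terms. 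Multiplying by $\bigl|a_{n_{11}}(t)\overline{a_{n_{12}}(t)}a_{n_{21}}(t)\overline{a_{n_{22}}(t)}\bigr|$, which depends on $\bold{x}$ only through $\mathcal{L}$ and hence only on $\Pi$ via $n_{11}(\Pi),\dots,n_{22}(\Pi)$, and summing over $\mathcal{L}$ yields the assertion. The step I expect to be the main obstacle is the refinement: one must exhibit a single decomposition $\Psi(\bold{x})$ that is simultaneously orientation-matching (so that $J_\pi\equiv1$), weight-preserving, and connected, and then observe that re-indexing the $\bold{x}$-sum by $\Pi$ rather than by $(\Gamma,\bold{x})$ costs nothing because the inner sum over $\bold{x}$ is left unrestricted — this is precisely the mechanism by which the higher lumpings are absorbed into pairings with no combinatorial prefactor.
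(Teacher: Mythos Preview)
Your argument is correct and follows the same route the paper invokes by citing Section~4 and equation~(4.14) of \cite{3}: bound $|A(\bold{x})|$ by $\prod_e\sqrt{S_{x_e}}$, refine each balanced lumping into an antiparallel pairing (choosing one bridge to span both chains so that the result lies in $\mathfrak{M}_c$), and then drop the refinement constraint to get an unrestricted $\bold{x}$-sum indexed by $\Pi$. Your explicit computation of $A(\bold{x})$ via the uniform-on-$\mathbb{S}^1$ moments is exactly the simplification available in this model and makes the mechanism transparent.
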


\begin{subsection}{Collapsing of parallel bridges}

We further construct as in \cite{3} the \emph{skeleton} $\Sigma=S(\Pi)$ of a pairing $\Pi \in \mathfrak{M}_c$ by collapsing all parallel bridges of $\Pi$\,. By definition the bridges $\{e_1, e'_1\}$ and $\{e_2, e'_2 \}$ are $\textit{parallel}$ if $b(e_1)=a(e_2)\in V_b(\Pi)$ and $b(e'_2)=a(e'_1)\in V_b(\Pi)$\,. To each $\Pi \in \mathfrak{M}_c$ we associate a couple $(\Sigma, l_{\Sigma})$, where $\Sigma \in \mathfrak{M}_c$ has no parallel bridges and $l_{\Sigma}:=(l_\sigma)_{\sigma \in \Sigma} \in \mathbb{N}^{\Sigma}$\,. The integer $l_{\sigma}$ denotes the number of parallel bridges of $\Pi$ that were collapsed into the bridge $\sigma$ of $\Sigma\,.$ Conversely, for any given couple $(\Sigma, l_{\Sigma}), $ where $\Sigma \in \mathfrak{M}_c$ has no parallel bridges and $l_\Sigma \in \mathbb{N}^{\Sigma}$, we define $\Pi=G_{l_{\Sigma}}(\Sigma)$ as the pairing obtained from $\Sigma$ by replacing for each bridge $\sigma \in \Sigma$\,, the bridge $\sigma$ with $l_{\sigma}$ parallel bridges. This construction gives a bijective mapping $\Pi \longleftrightarrow (\Sigma, l_{\Sigma})\,.$
We further define the set of admissible skeletons as 
\begin{equation}
\mathfrak{G}\;\deq\;\{S(\Pi): \Pi \in \mathfrak{M}_c \}\,.
\end{equation}
Note that all $\Sigma \in \mathfrak{G}$ are connecting.\\
The following result is to check from the definition of $\mathfrak{G}$; see Lemma $7.4   $ $(ii)$ in \cite{1}\,.
\begin{lemma}
Let $\{e, e'\} \in \Sigma$. Then $e$ and $e'$ are adjacent only if $e\cap e' \in V_w(\Sigma)\,.$
\end{lemma}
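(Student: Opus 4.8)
The plan is to argue by contradiction, exploiting that the only non‑backtracking constraints present in the weight $Q_{y_1,y_2}$ are imposed at the \emph{black} vertices. Suppose $\{e,e'\}\in\Sigma$ with $e$ and $e'$ adjacent at a common vertex $v$, and suppose towards a contradiction that $v\in V_b(\Sigma)$. Since $v$ is black it is an interior vertex of one of the two directed chains, so exactly one edge of $\Sigma$ has $v$ as its successor and exactly one has $v$ as its predecessor. Hence $e$ and $e'$ can meet at $v$ only in one of three ways: $v=b(e)=b(e')$, or $v=a(e)=a(e')$, or (after swapping $e,e'$ if needed) $v=b(e)=a(e')$.

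First I would dispose of the two ``same orientation'' configurations: if $v=b(e)=b(e')$ then $e$ and $e'$ are both the unique incoming edge at the black vertex $v$, forcing $e=e'$ and contradicting that the bridge $\{e,e'\}$ is a lump of size two; the case $v=a(e)=a(e')$ is identical using the unique outgoing edge at $v$. For the remaining configuration $v=b(e)=a(e')$, I would use that a bridge pairs the Hermitian matrix entries carried by its two edges, which is exactly what $J_{\{e,e'\}}$ records: it enforces $x_{a(e')}=x_{b(e)}$ (automatic here, as $a(e')=b(e)=v$) together with $x_{b(e')}=x_{a(e)}$. Consequently the three consecutive vertices $a(e),\,v,\,b(e')$ along the chain carry labels $x_{a(e)},\,x_v,\,x_{a(e)}$, i.e.\ the chain backtracks through $v$. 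But $v$ is black, so the factor $\prod_{i\in V_b(\mathcal{L})}\bold{1}(x_{a(i)}\neq x_{b(i)})$ in $Q_{y_1,y_2}$ contains the indicator $\bold{1}(x_{a(e)}\neq x_{b(e')})$, which annihilates every configuration $\bold{x}$. Hence such a bridge produces value zero, so it cannot occur among the pairings whose skeletons make up $\mathfrak{G}$.

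To conclude I would check that the collapsing map $\Pi\mapsto\Sigma=S(\Pi)$ cannot manufacture a new black‑vertex adjacency: merging two parallel bridges $\{e_1,e_1'\},\{e_2,e_2'\}$ into a single bridge $\{\sigma,\sigma'\}$ only erases the black vertices lying strictly between them, and if $\sigma$ and $\sigma'$ were adjacent at a black vertex the Hermitian‑pairing‑versus‑non‑backtracking argument of the previous paragraph would apply verbatim to $\sigma,\sigma'$. Thus any adjacent pair $\{e,e'\}\in\Sigma$ must meet at a vertex of $V_w(\Sigma)$; this is fully consistent with the statement, since at a root or a summit no non‑backtracking indicator is imposed, so adjacency there is genuinely permitted. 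I expect the main obstacle to be precisely this last compatibility point — tracking how the predecessor/successor relations and the set $V_b$ behave under the collapsing map $S$, and pinning down the sense in which the vanishing pairings are discarded in passing to $\mathfrak{G}$ — while the combinatorial core (the three cases above) is routine and matches Lemma $7.4\,(ii)$ of \cite{1}.
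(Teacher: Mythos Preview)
Your first two configurations are handled correctly: at a black (interior) vertex there is a unique incoming and a unique outgoing edge, so $b(e)=b(e')$ or $a(e)=a(e')$ forces $e=e'$. The gap is in the third configuration $v=b(e)=a(e')\in V_b(\Sigma)$. There you invoke the non-backtracking indicator in $Q_{y_1,y_2}$ together with $J_{\{e,e'\}}$ to conclude that the contribution vanishes, and from this you infer that such a bridge ``cannot occur among the pairings whose skeletons make up $\mathfrak{G}$''. That inference is invalid: $\mathfrak{M}_c$ and $\mathfrak{G}=\{S(\Pi):\Pi\in\mathfrak{M}_c\}$ are purely combinatorial sets, defined with no reference to the weights $Q_{y_1,y_2}$, $J_\pi$ or $A(\bold{x})$, so a pairing with zero value is still a member. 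You flag this concern yourself in the last paragraph, but the resolution you propose there is again value-based and hence does not close the gap. Note also that the non-backtracking indicator is explicitly dropped in passing from $(2.9)$ to $R(\Sigma)$; downstream the lemma is therefore genuinely needed as a structural fact about $\Sigma$, not merely as a statement about which terms contribute.

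The paper gives no proof of its own; it says the claim is ``to check from the definition of $\mathfrak{G}$'' and refers to Lemma~7.4(ii) of \cite{1}. The intended argument is thus combinatorial, via the defining property that a skeleton contains no parallel bridges. For the configuration $b(e)=a(e')\in V_b$ one reads the definition of \emph{parallel} with $(e_1,e_1')=(e,e')$ and $(e_2,e_2')=(e',e)$: both requirements $b(e_1)=a(e_2)\in V_b$ and $b(e_2')=a(e_1')\in V_b$ reduce to the single hypothesis $b(e)=a(e')\in V_b$, so the bridge $\{e,e'\}$ is parallel to itself and is excluded from any skeleton. This purely combinatorial step is what should replace your value-based reasoning in case~3; with it, the collapsing-compatibility discussion of your final paragraph becomes unnecessary.
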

\noindent
\begin{figure}[ht]
\begin{center}
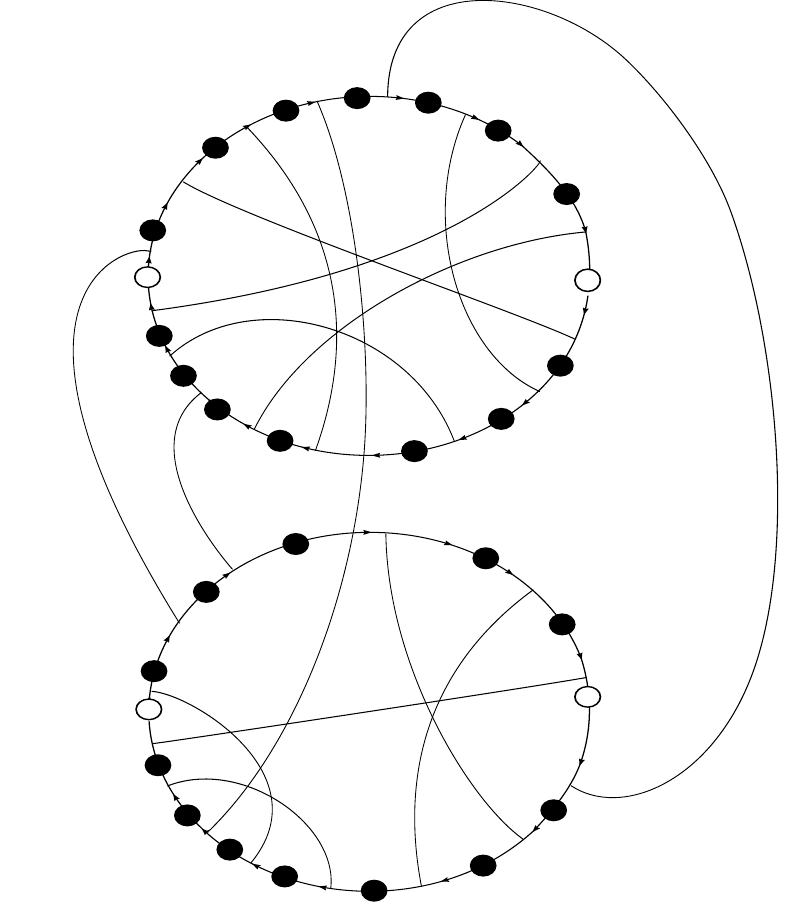
\caption{Graphical representation of the skeleton for a given configuration.}
\end{center}
\end{figure}
In the following we rewrite the right hand side of $(2.7)$ using the summation over skeleton pairings $\Sigma=S(\Pi)$, followed by different ways of expanding the bridges of $\Sigma$\,. For this, let $\Pi=G_{l_\Sigma}(\Sigma)$\,. We further define  $|l_{\Sigma}|\deq\sum_{\sigma \in \Sigma}l_{\sigma}$ for $\Sigma \in \mathfrak{G}$ and $l_{\Sigma} \in \mathbb{N}^{\Sigma}$. For the skeleton $\Sigma \in  \mathfrak{G}$ of the pairing $\Pi=G_{l_{\Sigma}}(\Sigma)$ we use the notation  $n_{ij}(\Sigma, l_{\Sigma})$ for $n_{ij}(\Pi)$, for all $i, j \in \{1,2\}$\,.\\
Parametrising $\Pi$ using $\Sigma$ and $l_{\Sigma}$ and neglecting the non-backtracking condition in the definition of $Q_{y_1,y_2}(\bold{x})$ we obtain the following upper bound (for full details see Lemma $7.6$ in \cite{1})\,.
\begin{multline}
\sum\limits_{\bold{x} \in \Lambda_N^{V(\Sigma)}}Q_{y_1, y_2}(\bold{x})\prod\limits_{\{e,e'\} \in \Sigma}\left(S^{l_{ \{ e,e'   \} } }\right)_{x_{e}}\prod\limits_{\sigma \in \Sigma}J_{\sigma}(\bold{x})
\\
\;\leq\;
\sum\limits_{\bold{x} \in \Lambda_N^{V(\Sigma)}}\bold{1}(0=x_{r(\mathcal{L}_1(\Sigma))})\bold{1}(0=x_{r(\mathcal{L}_2(\Sigma))})\bold{1}(y_1=x_{s(\mathcal{L}_1(\Sigma))})
\bold{1}(y_2=x_{s(\mathcal{L}_2(\Sigma))})\prod\limits_{\{e,e'\} \in \Sigma}\left(S^{l_{ \{ e,e'   \} } }\right)_{x_{e}}\prod\limits_{\sigma \in \Sigma}J_{\sigma}(\bold{x})\,.
\end{multline}
Let us define 
\begin{align*}
R(\Sigma)\;:=\;\sum\limits_{\bold{x}\in \Lambda_N^{V(\Sigma)}}\bold{1}(x_{r(\mathcal{L}_1(\Sigma))}=0)\bold{1}(x_{r(\mathcal{L}_2(\Sigma))}=0)\prod\limits_{\{e,e'\} \in \Sigma}\left(S^{l_{ \{ e,e'   \} } }\right)_{x_{e}}\prod\limits_{\sigma \in \Sigma}J_{\sigma}(\bold{x})\,.
\end{align*}
The following result is obtained using $(2.9)$\,.  
\begin{lemma}
We have that
\begin{align*}
\sum\limits_{y_1}\sum\limits_{y_2}\langle P(t,y_1) ; P(t,y_2)\rangle\;\leq \;\sum\limits_{\Sigma \in \mathfrak{G}}\sum\limits_{l_{\Sigma}}|a_{n_{11}(\Sigma, l_{\Sigma})}(t)\overline{a_{n_{12}(\Sigma, l_{\Sigma})}(t)}a_{n_{21}(\Sigma, l_{\Sigma})}(t)\overline{a_{n_{22}(\Sigma, l_{\Sigma})}(t)}|R(\Sigma)\,.
\end{align*}
\end{lemma}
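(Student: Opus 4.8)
The plan is to take the bound of Lemma~2.4 as the starting point, sum it over $y_1,y_2\in\Lambda_N$, and then reorganise the right-hand side exactly along the collapsing construction already set up in this section. Since $\sum_{y_1}\sum_{y_2}\langle P(t,y_1);P(t,y_2)\rangle\leq\sum_{y_1}\sum_{y_2}|\langle P(t,y_1);P(t,y_2)\rangle|$, it suffices to bound the latter, and Lemma~2.4 expresses $|\langle P(t,y_1);P(t,y_2)\rangle|$ as a sum over $\Pi\in\mathfrak{M}_c$ of the coefficient product $|a_{n_{11}(\Pi)}(t)\overline{a_{n_{12}(\Pi)}(t)}a_{n_{21}(\Pi)}(t)\overline{a_{n_{22}(\Pi)}(t)}|$ times $\sum_{\bold x}Q_{y_1,y_2}(\bold x)\prod_{\{e,e'\}\in\Pi}S_{x_e}\prod_{\pi\in\Pi}J_\pi(\bold x)$. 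After noting that the sums over the chain lengths $n_{ij}$ converge absolutely (which is what legitimises all interchanges of summation below), I would apply the bijection $\Pi\longleftrightarrow(\Sigma,l_\Sigma)$ with $\Sigma=S(\Pi)\in\mathfrak{G}$ and $l_\Sigma\in\mathbb{N}^\Sigma$ to rewrite the $\Pi$-sum as $\sum_{\Sigma\in\mathfrak{G}}\sum_{l_\Sigma}$, with the four lengths becoming $n_{ij}(\Sigma,l_\Sigma)$ by definition.

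Next I would carry out the collapsing at the level of the vertex sum. For fixed $\Sigma$ and $l_\Sigma$ the graph $\Pi=G_{l_\Sigma}(\Sigma)$ is obtained from $\Sigma$ by replacing each bridge $\sigma$ with $l_\sigma$ parallel copies, so the extra vertices of $V(\Pi)\setminus V(\Sigma)$ are precisely the black vertices internal to these runs; each such vertex carries a label appearing only in two consecutive $S$-factors and in the non-backtracking indicators of $Q_{y_1,y_2}$. Dropping those internal non-backtracking indicators (which only enlarges the sum, keeping the inequality in the right direction) and summing over the internal labels, a run of $l_\sigma$ parallel bridges contributes $\sum S_{u,z_1}S_{z_1,z_2}\cdots S_{z_{l_\sigma-1},v}=(S^{l_\sigma})_{u,v}$ at the two surviving endpoints, i.e.\ the matrix power $(S^{l_{\{e,e'\}}})_{x_e}$, while the pairing indicators $J_\pi$ of the collapsed bridges reduce to the single skeleton indicator $J_\sigma$. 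This is exactly the passage recorded schematically above and in Lemma~7.6 of \cite{1}, and it produces the upper bound $\sum_{\bold x\in\Lambda_N^{V(\Sigma)}}Q_{y_1,y_2}(\bold x)\prod_{\{e,e'\}\in\Sigma}(S^{l_{\{e,e'\}}})_{x_e}\prod_{\sigma\in\Sigma}J_\sigma(\bold x)$ for the inner sum, with $Q_{y_1,y_2}$ now read on the skeleton graph.

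I would then invoke $(2.9)$ to bound $Q_{y_1,y_2}(\bold x)$ on $V(\Sigma)$ by the product of the four white-vertex indicators $\bold 1(0=x_{r(\mathcal{L}_1(\Sigma))})\bold 1(0=x_{r(\mathcal{L}_2(\Sigma))})\bold 1(y_1=x_{s(\mathcal{L}_1(\Sigma))})\bold 1(y_2=x_{s(\mathcal{L}_2(\Sigma))})$, discarding the remaining non-backtracking constraints at the black skeleton vertices. Finally I perform the sums over $y_1$ and $y_2$: the variable $y_1$ occurs only through $\bold 1(y_1=x_{s(\mathcal{L}_1(\Sigma))})$ and $y_2$ only through $\bold 1(y_2=x_{s(\mathcal{L}_2(\Sigma))})$, so $\sum_{y_1}\sum_{y_2}$ simply deletes these two indicators and leaves $\sum_{\bold x\in\Lambda_N^{V(\Sigma)}}\bold 1(x_{r(\mathcal{L}_1(\Sigma))}=0)\bold 1(x_{r(\mathcal{L}_2(\Sigma))}=0)\prod_{\{e,e'\}\in\Sigma}(S^{l_{\{e,e'\}}})_{x_e}\prod_{\sigma\in\Sigma}J_\sigma(\bold x)=R(\Sigma)$. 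Pulling the coefficient products out of the $\bold x$-sum then gives the claimed inequality $\sum_{y_1}\sum_{y_2}\langle P(t,y_1);P(t,y_2)\rangle\leq\sum_{\Sigma\in\mathfrak{G}}\sum_{l_\Sigma}|a_{n_{11}(\Sigma,l_\Sigma)}(t)\overline{a_{n_{12}(\Sigma,l_\Sigma)}(t)}a_{n_{21}(\Sigma,l_\Sigma)}(t)\overline{a_{n_{22}(\Sigma,l_\Sigma)}(t)}|R(\Sigma)$.

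The main obstacle is the bookkeeping in the collapsing step: one must check that the bijection $\Pi\longleftrightarrow(\Sigma,l_\Sigma)$ is compatible with the vertex sum in the exact sense that, after discarding the internal non-backtracking indicators, each eliminated black vertex label is summed freely and appears in precisely the two adjacent $S$-factors (with the matching $J$-constraint identifying the two parallel legs), so that the elimination yields the clean matrix powers $S^{l_\sigma}$ with no surviving cross terms. Everything beyond this is a mechanical reindexing and rests on the cited Lemma~7.6 of \cite{1} and Section~4 of \cite{3}; the only analytic point requiring care is the routine absolute convergence of the sums over the chain lengths $n_{ij}$, which underpins the interchanges of summation used throughout.
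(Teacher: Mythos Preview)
Your proposal is correct and follows exactly the approach the paper has in mind: the paper's entire proof is the one-line remark ``obtained using $(2.9)$,'' and what you have written simply unpacks that---start from Lemma~2.4, reparametrise $\Pi\leftrightarrow(\Sigma,l_\Sigma)$ via the collapsing bijection, drop the non-backtracking indicators to pass to $(2.9)$ (this is the step the paper attributes to Lemma~7.6 of \cite{1}), and then sum out $y_1,y_2$ against the summit indicators to arrive at $R(\Sigma)$. There is no methodological difference.
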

\noindent
The following result follows easy from the definition of $S_{xy}$.
\begin{lemma}
Let $l \in \mathbb{N}$\,. For each $x,y \in \Lambda_N$ we have
\begin{enumerate}
\item $\sum\limits_{y}(S^l)_{xy}=(\frac{M}{M-1})^{l}\,.$
\item $(S^l)_{xy}\;\leq\;(\frac{M}{M-1})^{l-1}\frac{1}{M-1}\,.$
\end{enumerate}
\end{lemma}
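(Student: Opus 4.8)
The statement is a short induction on $l$, resting on only two structural facts about $S$: all of its row sums are equal to $M/(M-1)$, and every entry is either $0$ or $1/(M-1)$, so that $0 \leq S_{zy} \leq 1/(M-1)$ for all $z,y \in \Lambda_N$. I would first record the row-sum identity. By the periodic translation invariance of $|\cdot|$, for any fixed $x$ the number of $y \in \Lambda_N$ with $1 \leq |x-y| \leq W$ equals the number of $z \in \Lambda_N$ with $1 \leq |z| \leq W$; and in the regime $W < N/2$ (which is guaranteed once $N \geq W^{1+d/6}$) the periodic ball of radius $W$ coincides with the corresponding Euclidean ball, so that count is exactly $M$. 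Hence $\sum_y S_{xy} = M/(M-1)$, which is part (i) for $l=1$; part (i) for $l=0$ is immediate since $S^0$ is the identity matrix.

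For the inductive step of part (i) I would write $(S^l)_{xy} = \sum_z (S^{l-1})_{xz} S_{zy}$, sum over $y \in \Lambda_N$, interchange the two finite sums, and insert the base case:
$$\sum_y (S^l)_{xy} \;=\; \sum_z (S^{l-1})_{xz} \sum_y S_{zy} \;=\; \frac{M}{M-1}\sum_z (S^{l-1})_{xz}\,.$$
Applying the induction hypothesis to the last sum gives $(M/(M-1))^l$, as required.

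Part (ii) then follows at once for $l \geq 1$: in $(S^l)_{xy} = \sum_z (S^{l-1})_{xz} S_{zy}$ one bounds each factor $S_{zy}$ by $1/(M-1)$ and uses part (i) with exponent $l-1$ for the remaining sum,
$$(S^l)_{xy} \;\leq\; \frac{1}{M-1}\sum_z (S^{l-1})_{xz} \;=\; \frac{1}{M-1}\left(\frac{M}{M-1}\right)^{l-1}\,.$$
No genuine obstacle arises; the computation is purely combinatorial. The only point deserving a moment's care is the claim that the number of sites at periodic distance between $1$ and $W$ of a fixed site equals $M$, which is exactly where the smallness of $W$ relative to $N$ enters — without it the periodic ball could overlap itself and the row sums would differ from $M/(M-1)$.
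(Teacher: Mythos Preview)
Your argument is correct and is precisely the elementary induction the paper has in mind; the paper itself offers no proof beyond the remark that the lemma ``follows easy from the definition of $S_{xy}$,'' and your computation is the natural way to unpack that remark.
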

\end{subsection}

\begin{subsection}{Orbits of vertices}

Let us fix $\Sigma \in \mathfrak{G}$\,. On the set of vertices $ V(\Sigma)$  we construct the  $\textit{orbits of vertices}$ as in \cite{1}\,. We define $\tau : V(\Sigma) \to V(\Sigma)$ as follows. Let $i \in V(\Sigma)$ and let $e$ be the unique edge such that $\{\{i, b(i)\}, e\} \in \Sigma $\,. Then, for any vertex $i$ of $\Sigma \in \mathfrak{G}$  we define
$\tau i \deq b(e)$. We denote the orbit of the vertex $i \in \Sigma$ by $[i]\;:=\; \{ \tau^n i : n \in \mathbb{N}\}$\,.\\ 
We order the edges of $\Sigma$ in some arbitrary fashion and denote this order by $<$\,. Each bridge $\sigma \in \Sigma$ ''sits between" the orbits $\zeta_1(\sigma)$ and $\zeta_2(\sigma)$. More precisely, let $\sigma=\{e , e'\}$ with $ e <e'$ and $e=\{ i, b(i)\}$\,. Then, $\zeta_1(\sigma)\deq [i]$ and $\zeta_2(\sigma)\deq [b(i)]$\,.\\
Let $Z(\Sigma):=\{[i] : i \in V(\Sigma)\}$ be the set of orbits of $\Sigma$\,. This set contains four distinguished orbits $\{ [r(\mathcal{L}_1)],[r(\mathcal{L}_2)],[s(\mathcal{L}_1)],[s(\mathcal{L}_2)] \}$ which need not be distinct. Let $|\Sigma|$ be the number of bridges of the skeleton $\Sigma \in \mathfrak{G}$ and let $L(\Sigma)=|Z^*(\Sigma)|$ with $Z^*(\Sigma):=Z(\Sigma)\setminus \{[r(\mathcal{L}_1)],[r(\mathcal{L}_2)]\}$\,.
\begin{figure}[ht]
\begin{center}
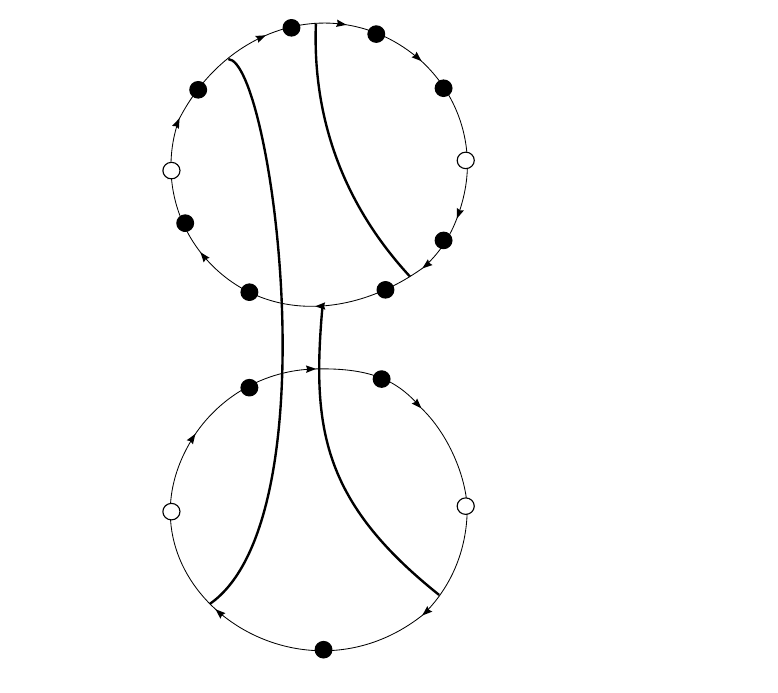
\caption{The construction of the orbit $[i]$ for the vertex $i$.}
\end{center}
\end{figure}

\noindent
The following result is an adaptation of the $\textit{2/3-rule}$ introduced in Lemma $7.7$ of \cite{1}\,.
\begin{lemma}
We have the inequality
$$ L(\Sigma )\;\leq\;\frac{2|\Sigma|}{3}+\frac{2}{3}\,.$$
\end{lemma}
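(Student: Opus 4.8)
The plan is to follow the counting argument from Lemma 7.7 of \cite{1}, adapted to the double-rooted setting where the underlying graph $\mathcal{L}(\Sigma)$ has two chains $\mathcal{L}_1(\Sigma)$ and $\mathcal{L}_2(\Sigma)$ rather than one. First I would recall that each bridge $\sigma = \{e, e'\} \in \Sigma$ contributes exactly two edges to $E(\Sigma)$, so $|E(\Sigma)| = 2|\Sigma|$, and that the total number of vertices satisfies $|V(\Sigma)| = |E(\Sigma)| + (\text{number of connected components of } \mathcal{L}(\Sigma)) = 2|\Sigma| + 2$, since $\mathcal{L}(\Sigma) = \mathcal{L}_1(\Sigma) \sqcup \mathcal{L}_2(\Sigma)$ has two components, each a directed path. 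The map $\tau : V(\Sigma) \to V(\Sigma)$ partitions $V(\Sigma)$ into orbits, and the key structural fact — to be extracted from the definition of $\mathfrak{G}$ and from Lemma 2.7 (adjacent edges in a bridge meet only at white vertices) — is that every orbit $[i]$ has size at least two, \emph{except} possibly for orbits consisting of a single white vertex (root or summit). The reason is that a black vertex $i$ always has a well-defined predecessor edge $\{i, b(i)\}$ lying in some bridge $\sigma$, and $\tau$ moves $i$ to a genuinely different vertex $b(e)$; iterating, the orbit of a black vertex cannot close up after one step without violating the no-parallel-bridges condition or Lemma 2.7.

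Next I would set up the double count. Write $V(\Sigma) = V_b(\Sigma) \sqcup V_w(\Sigma)$ with $|V_w(\Sigma)| \leq 4$ (the four distinguished vertices $r(\mathcal{L}_1), r(\mathcal{L}_2), s(\mathcal{L}_1), s(\mathcal{L}_2)$, possibly with coincidences). Partition the orbit set $Z(\Sigma)$ into those orbits containing at least one white vertex and those consisting entirely of black vertices. The former number at most $|V_w(\Sigma)| \leq 4$; among these, the two orbits $[r(\mathcal{L}_1)]$ and $[r(\mathcal{L}_2)]$ are removed in passing to $Z^*(\Sigma)$. For the black-only orbits, the claim that each has size $\geq 2$ combined with the earlier refinement (from \cite{1}) that in fact, after accounting for how bridges sit between orbits, the orbits carry an average weight that forces $3 L(\Sigma) \leq 2|\Sigma| + 2$. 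Concretely: each bridge $\sigma$ sits between two orbits $\zeta_1(\sigma), \zeta_2(\sigma)$, giving an incidence bound $\sum_{\zeta \in Z(\Sigma)} (\text{number of bridges incident to } \zeta) = 2|\Sigma|$; every orbit in $Z^*(\Sigma)$ except at most the two summit orbits is incident to at least three bridges (this is precisely the content of the $2/3$-rule mechanism — an orbit of black vertices visited by $\tau$ must pick up at least three distinct bridges, else one produces parallel bridges), while the two summit orbits are incident to at least one. This yields $3(L(\Sigma) - 2) + 2 \leq 2|\Sigma|$, hence $L(\Sigma) \leq \tfrac{2}{3}|\Sigma| + \tfrac{4}{3}$; a slightly more careful bookkeeping of the white orbits (using that $[r(\mathcal{L}_1)], [r(\mathcal{L}_2)]$ are already excluded and that summit orbits may coincide with root orbits) tightens the additive constant to $\tfrac{2}{3}$.

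The main obstacle, I expect, is the combinatorial heart of the $2/3$-mechanism: showing that each ``generic'' orbit is incident to at least three distinct bridges. This requires a careful case analysis using Lemma 2.7 and the no-parallel-bridges property of skeletons — one must rule out an orbit that is incident to only one or two bridges by showing that such a configuration either creates a pair of parallel bridges (contradicting $\Sigma \in \mathfrak{G}$) or forces two adjacent edges of a bridge to meet at a black vertex (contradicting Lemma 2.7). Since the underlying graph here is a disjoint union of two chains rather than one, I would need to check that the argument of \cite{1} goes through uniformly in which chain(s) a given orbit meets — in particular that an orbit straddling both $\mathcal{L}_1(\Sigma)$ and $\mathcal{L}_2(\Sigma)$ (which exists precisely because $\Sigma$ is \emph{connected}) is still counted correctly. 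The bound on the additive constant is not delicate; what matters for the subsequent estimates is the factor $2/3$, which is exactly what drives the constraint $\kappa < 1/3$ in the main theorem.
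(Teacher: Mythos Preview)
Your approach via bridge--orbit incidences is in fact \emph{the same} as the paper's vertex count, because for each vertex $i$ the outgoing edge $\{i,b(i)\}$ lies in a unique bridge $\sigma$ with $[i]\in\{\zeta_1(\sigma),\zeta_2(\sigma)\}$; this gives a bijection between vertices in an orbit $\zeta$ and incidences of bridges to $\zeta$. So ``$\zeta$ is incident to at least three bridges'' and ``$|\zeta|\geq 3$'' are identical statements (note that you wrote both ``size at least two'' and ``at least three bridges'' --- the correct bound, from \cite{1}, is three).

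There are two concrete gaps. First, your vertex count $|V(\Sigma)|=2|\Sigma|+2$ is wrong: each $\mathcal{L}_k(\Sigma)$ is a closed directed cycle (the path $r\to s$ of length $n_{k1}$ followed by $s\to r$ of length $n_{k2}$), not an open path, so $|V(\mathcal{L}_k)|=|E(\mathcal{L}_k)|$ and $|V(\Sigma)|=2|\Sigma|$. Second, and more importantly, your bookkeeping only yields $L(\Sigma)\leq \tfrac{2}{3}|\Sigma|+\tfrac{4}{3}$, and you wave your hands at the improvement to $+\tfrac{2}{3}$. This constant is not cosmetic: it feeds directly into the exponent $-|\Sigma|/3+2/3$ in Lemma~3.1, and with $+\tfrac{4}{3}$ the bound \eqref{eq} for $E^{\leq}$ at $|\Sigma|=3$ would be $M^{\mu+1/3}$, which diverges.

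The paper's fix is a one-line sharpening of your count: instead of working with $Z^*(\Sigma)$ directly, first remove \emph{all four} white orbits to form $Z'(\Sigma)\deq Z(\Sigma)\setminus\{[r(\mathcal{L}_1)],[r(\mathcal{L}_2)],[s(\mathcal{L}_1)],[s(\mathcal{L}_2)]\}$. Every orbit in $Z'(\Sigma)$ consists entirely of black vertices, and there are exactly $2|\Sigma|-4$ black vertices; since each such orbit has size $\geq 3$ (this is the input from \cite{1}), one gets $3|Z'(\Sigma)|\leq 2|\Sigma|-4$, i.e.\ $|Z'(\Sigma)|\leq \tfrac{2}{3}|\Sigma|-\tfrac{4}{3}$. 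Then $|Z^*(\Sigma)|\leq |Z'(\Sigma)|+2$ (one adds back at most the two summit orbits), which gives the claimed $+\tfrac{2}{3}$. The point you missed is that by bounding the $Z^*$-incidences by the full total $2|\Sigma|$ you are throwing away the (at least two, and often many more) incidences absorbed by the root orbits; excluding all white orbits first avoids this loss.
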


\begin{proof}

Let $Z'(\Sigma):=Z(\Sigma)\setminus \{ [r(\mathcal{L}_1)], [r(\mathcal{L}_2)], [s(\mathcal{L}_1)], [s(\mathcal{L}_2)]\}$\,. Using the same reasoning as in the proof of the $\textit{2/3 rule}$ in \cite{1} we obtain that each orbit contains at least 3 vertices.

The total number of vertices of $\Sigma$  not including $\{r(\mathcal{L}_1),r(\mathcal{L}_2),s(\mathcal{L}_1),s(\mathcal{L}_2) \} $ is $2|\Sigma|-4$\,. It follows that $3|Z'(\Sigma)|\;\leq\; 2|\Sigma|-4 \Leftrightarrow |Z'(\Sigma)|\;\leq\; 2|\Sigma|/3-4/3$.
 
 Using that $|Z^*(\Sigma)|\; \leq\; |Z'(\Sigma)|+2$, we obtain $|Z^*(\Sigma)|\;\leq \; 2|\Sigma|/3+2/3$\,.
\end{proof}
\noindent
We remark that Lemma $2.7$ is sharp in the sense that there exists $\Sigma \in \mathfrak{G}$ such that the estimate of Lemma $2.7$ saturates.
\end{subsection}

\begin{section}{The case $|\Sigma|\;\geq\; 3$}

Using Lemma $2.7$ and the same argument as in Section $7.5$ of \cite{1} we obtain the following result.
\begin{lemma}
Let $\Sigma \in \mathfrak{G}$ and $l_{\Sigma} \in \Lambda_N^{V(\Sigma)}$. We have that
\begin{align*}
R(\Sigma)\;\leq\; C \left(\frac{M}{M-1}\right)^{|l_{\Sigma}|}M^{-|\Sigma|/3+ 2/3}\,.
\end{align*}
\end{lemma}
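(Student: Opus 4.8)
The goal is to bound $R(\Sigma)$ for skeletons with $|\Sigma|\geq 3$ by estimating the sum over vertex labels $\bold{x}\in\Lambda_N^{V(\Sigma)}$ subject to the constraints encoded in the $J_\sigma(\bold{x})$ indicator functions and the two root constraints. The strategy is to organize the summation over labels according to the orbit structure introduced in Section $2.4$: the labels are essentially constant along each orbit $[i]\in Z(\Sigma)$, so the effective number of free label summations is $L(\Sigma)=|Z^*(\Sigma)|$ (the two root orbits are pinned to $0$), and one then applies the $2/3$-rule from Lemma $2.7$ to convert the combinatorial count into the exponent $-|\Sigma|/3+2/3$.

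\textbf{Key steps.} First I would recall from the construction in \cite{1} how $R(\Sigma)$ is evaluated: each bridge $\sigma=\{e,e'\}\in\Sigma$ contributes a factor $(S^{l_{\{e,e'\}}})_{x_e}$, and each $J_\sigma(\bold{x})$ forces certain labels to coincide (the ones sitting in the same orbit). Second, I would perform the label summation orbit by orbit, peeling off one orbit at a time. For every bridge $\sigma$ we have two associated orbits $\zeta_1(\sigma),\zeta_2(\sigma)$; summing the factor $(S^{l_\sigma})_{x_e}$ over the free endpoint label produces, by part (i) of Lemma $2.6$, a factor $(M/(M-1))^{l_\sigma}$, while at each point where we are forced to \emph{identify} labels rather than sum freely we instead use part (ii) of Lemma $2.6$, which costs a factor $(M/(M-1))^{l_\sigma-1}(M-1)^{-1}$, i.e.\ essentially $M^{-1}$ up to the harmless $(M/(M-1))$ prefactors. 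The number of summed orbits is $L(\Sigma)$ and the number of bridges is $|\Sigma|$, so after collecting the $M^{-1}$ factors against the $(M/(M-1))$ prefactors one arrives at a bound of the form $C\,(M/(M-1))^{|l_\Sigma|}\,M^{\,L(\Sigma)-|\Sigma|}$. Third, I would invoke Lemma $2.7$, $L(\Sigma)\leq \tfrac{2}{3}|\Sigma|+\tfrac{2}{3}$, so that $L(\Sigma)-|\Sigma|\leq -\tfrac{1}{3}|\Sigma|+\tfrac{2}{3}$, yielding the claimed estimate
\[
R(\Sigma)\;\leq\; C\left(\frac{M}{M-1}\right)^{|l_\Sigma|}M^{-|\Sigma|/3+2/3}\,.
\]
The constant $C$ absorbs the finitely many $(M/(M-1))$ factors not already bundled into the $(M/(M-1))^{|l_\Sigma|}$ term (there are at most $O(|\Sigma|)$ of them, but since $(M/(M-1))^{|\Sigma|}$ is bounded uniformly once $M\geq 2$ and $|\Sigma|$ is controlled, this is absorbed — more carefully, one keeps them inside the exponential factor).

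\textbf{Main obstacle.} The delicate point is the bookkeeping of exactly which label summations are "free" (giving $M/(M-1)$ via Lemma $2.6$(i)) versus "pinned" (giving $M^{-1}$ via Lemma $2.6$(ii)), since this depends on the orbit structure and on Lemma $2.5$ (adjacent edges of a bridge meet only at white vertices), which guarantees that the graph obtained after collapsing orbits is well-behaved enough that each new orbit beyond the roots genuinely saves a factor of $M$. This is precisely the content of Section $7.5$ of \cite{1}, so rather than redo it I would cite that argument, noting that the only modification here is the presence of \emph{two} rooted chains and hence two pinned root orbits (reflected in the definition of $Z^*(\Sigma)$ and in the $+2/3$ rather than the single-chain constant), together with the replacement of the expectation by the variance, which has already been handled upstream in Lemmas $2.2$–$2.4$ by restricting to connected even lumpings. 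The verification that no additional loss occurs from the double-rooted structure — beyond the already-accounted $+2/3$ — is the step that needs the most care.
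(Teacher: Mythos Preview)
Your proposal is correct and matches the paper's own approach: the paper proves this lemma in one line by invoking Lemma~2.7 (the adapted $2/3$-rule) together with ``the same argument as in Section~7.5 of \cite{1}'', which is precisely the orbit-by-orbit summation using Lemma~2.6(i)--(ii) that you outline, arriving at $R(\Sigma)\leq C(M/(M-1))^{|l_\Sigma|}M^{L(\Sigma)-|\Sigma|}$ and then applying $L(\Sigma)\leq \tfrac{2}{3}|\Sigma|+\tfrac{2}{3}$. Your only hesitation about stray $(M/(M-1))$ factors is unnecessary: since Lemma~2.6(ii) gives exactly $(M/(M-1))^{l_\sigma}M^{-1}$, every bridge contributes $(M/(M-1))^{l_\sigma}$ and the pinned ones an additional $M^{-1}$, so the factor $(M/(M-1))^{|l_\Sigma|}$ is exact and nothing needs to be absorbed into $C$.
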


\begin{subsection}{Estimation of the variance for $|l_{\Sigma}| \ll M^{1/3}$}

Let $\mu < \frac{1}{3}$\,. In the summation $(2.12)$  we introduce a cut-off at $|l_{\Sigma}|< M^{\mu}$\,. We define
$$E^{\leq} \;\deq\; \sum\limits_{\Sigma \in \mathfrak{G}}\sum\limits_{|l_{\Sigma}|\leq M^{\mu}}|a_{n_{11}(\Sigma, l_{\Sigma})}(t)\overline{a_{n_{12}(\Sigma, l_{\Sigma})}(t)}a_{n_{21}(\Sigma, l_{\Sigma})}(t)\overline{a_{n_{22}(\Sigma, l_{\Sigma})}(t)}|R(\Sigma)\,.$$
The following result is proved in \cite{1}\,, Lemma $7.9$\,.
\begin{lemma}
\begin{enumerate}
\item For any time $t$ and for any $n \in \mathbb{N}$ we have $|a_n(t)|\leq \frac{Ct^n}{n!}\,, $ for some constant $C\,.$

\item We have $\sum\limits_{n\geq 0}|a_n(t)|^2=1+O(M^{-1})\,,$ uniformly in $t \in \mathbb{R}$\,.
\end{enumerate}
\end{lemma}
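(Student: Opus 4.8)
The plan is to reduce both claims to classical facts: an explicit Bessel representation of $\alpha_k(t)$, the small-argument bound for Bessel functions, and Parseval's identity for the Chebyshev polynomials of the second kind on $[-1,1]$. First I would put $\alpha_k(t)$ in closed form. Substituting $\zeta=\cos\theta$, using $U_k(\cos\theta)=\sin((k+1)\theta)/\sin\theta$ and a product-to-sum identity, one obtains $\alpha_k(t)=\frac{1}{\pi}\int_0^\pi e^{-\ii t\cos\theta}\bigl(\cos(k\theta)-\cos((k+2)\theta)\bigr)\,d\theta$. The Jacobi--Anger expansion $e^{-\ii t\cos\theta}=\sum_{n\in\Z}(-\ii)^n J_n(t)e^{\ii n\theta}$ together with $\frac{1}{\pi}\int_0^\pi e^{-\ii t\cos\theta}\cos(m\theta)\,d\theta=(-\ii)^m J_m(t)$ gives $\alpha_k(t)=(-\ii)^k\bigl(J_k(t)+J_{k+2}(t)\bigr)$, and the three-term recurrence $J_k(t)+J_{k+2}(t)=\frac{2(k+1)}{t}J_{k+1}(t)$ yields the compact formula $\alpha_k(t)=(-\ii)^k\,\frac{2(k+1)}{t}\,J_{k+1}(t)$. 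I expect this identity to be the only genuinely delicate point; everything after it is bookkeeping.

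For part (i), Poisson's integral representation of $J_\nu$ gives $|J_{k+1}(t)|\leq (|t|/2)^{k+1}/(k+1)!$, hence $|\alpha_k(t)|\leq (|t|/2)^k/k!$. Plugging this into $a_m(t)=\sum_{k\geq0}\alpha_{m+2k}(t)/(M-1)^k$ and using $(m+2k)!\geq m!\,k!$ to factor out $(|t|/2)^m/m!$, the remaining sum is $\sum_{k\geq0}(|t|/2)^{2k}/(k!\,(M-1)^k)=\exp\!\bigl(t^2/(4(M-1))\bigr)$, so that $|a_m(t)|\leq \frac{(|t|/2)^m}{m!}\exp\!\bigl(\frac{t^2}{4(M-1)}\bigr)$. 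On the time scales used in Theorem $2.1$, i.e.\ $t=W^{d\kappa}T$ with $T\leq T_0$, $M\asymp W^d$ and $\kappa<1/2$, one has $t^2/M\to0$, so the exponential factor is bounded and $|a_m(t)|\leq C\,t^m/m!$ with $C=C(T_0,\kappa)$.

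For part (ii), the point is that $\{U_k\}_{k\geq0}$ is an orthonormal basis of $L^2\bigl([-1,1],\tfrac{2}{\pi}\sqrt{1-\zeta^2}\,d\zeta\bigr)$ and $\alpha_k(t)$ is precisely the $k$-th coefficient of $\zeta\mapsto e^{-\ii t\zeta}$ in this basis, so Parseval together with $|e^{-\ii t\zeta}|\equiv1$ gives $\sum_{k\geq0}|\alpha_k(t)|^2=\tfrac{2}{\pi}\int_{-1}^1\sqrt{1-\zeta^2}\,d\zeta=1$ for every real $t$. Expanding $\sum_{m\geq0}|a_m(t)|^2=\sum_{k,k'\geq0}(M-1)^{-k-k'}\sum_{m\geq0}\alpha_{m+2k}(t)\overline{\alpha_{m+2k'}(t)}$, the term $k=k'=0$ equals $\sum_{m\geq0}|\alpha_m(t)|^2=1$. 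For the remaining terms Cauchy--Schwarz and $\sum_{j\geq 2k}|\alpha_j(t)|^2\leq1$ bound each inner sum in modulus by $1$, so their total contribution is at most $\sum_{(k,k')\neq(0,0)}(M-1)^{-k-k'}=\bigl(\tfrac{M-1}{M-2}\bigr)^2-1=O(M^{-1})$, uniformly in $t$. Combining the two pieces gives $\sum_{m\geq0}|a_m(t)|^2=1+O(M^{-1})$. The main obstacle is genuinely just deriving the Bessel identity for $\alpha_k(t)$ cleanly; the only other care needed is the caveat on (i), namely that the universal-looking constant there is uniform precisely because the relevant times satisfy $t\ll M^{1/2}$.
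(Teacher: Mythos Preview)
Your argument is correct. The paper does not give its own proof of this lemma; it simply invokes Lemma~7.9 of \cite{1}. Your self-contained derivation---the Bessel identity $\alpha_k(t)=(-\ii)^k\tfrac{2(k+1)}{t}J_{k+1}(t)$, the Poisson bound $|J_{k+1}(t)|\leq(|t|/2)^{k+1}/(k+1)!$, and Parseval for the Chebyshev basis $\{U_k\}$---is the standard route and is essentially the one used in \cite{1} as well, so there is no genuinely different approach to contrast.

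One remark on the substance: the caveat you raise about part (i) is accurate and worth keeping explicit. The bound $|a_m(t)|\leq \tfrac{(|t|/2)^m}{m!}\exp\!\bigl(\tfrac{t^2}{4(M-1)}\bigr)$ only yields a constant $C$ independent of $t$ once one restricts to $t^2\ll M$. Throughout the paper $t=W^{d\kappa}T$ with $\kappa<1/3$ and $M\asymp W^d$, so $t^2/M\asymp W^{d(2\kappa-1)}\to 0$ and the exponential factor is indeed $1+o(1)$; the statement as written in the paper suppresses this dependence. Your Fubini step in (ii) is justified since $\sum_m|\alpha_{m+2k}(t)||\alpha_{m+2k'}(t)|\leq 1$ by Cauchy--Schwarz and Parseval, after which the $(k,k')$-sum is a convergent double geometric series.
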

\noindent
A new estimate on $ \sum_{l_{\Sigma}}\bold{1}(|l_{\Sigma}|\;\leq\; M^{\mu})|a_{n_{11}(\Sigma, l_{\Sigma})}(t)\overline{a_{n_{12}(\Sigma, l_{\Sigma})}(t)}a_{n_{21}(\Sigma, l_{\Sigma})}(t)\overline{a_{n_{22}(\Sigma, l_{\Sigma})}(t)}|$ is established in the following lemma. The new technique is based on splitting the summation according to the bridges that touch the rooted directed chains.

\begin{lemma}
For any $\Sigma \in \mathfrak{G}$ with $|\Sigma|\geq 3$ we have 
$$ \sum\limits_{l_{\Sigma}}\bold{1}(|l_{\Sigma}|\;\leq\; M^{\mu})|a_{n_{11}(\Sigma, l_{\Sigma})}(t)\overline{a_{n_{12}(\Sigma, l_{\Sigma})}(t)}a_{n_{21}(\Sigma, l_{\Sigma})}(t)\overline{a_{n_{22}(\Sigma, l_{\Sigma})}(t)}| \leq \frac{CM^{\mu(|\Sigma|-2)}}{(|\Sigma|-3)!}\,.$$
\end{lemma}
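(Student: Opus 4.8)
The plan is to bound the sum by introducing, for each bridge $\sigma \in \Sigma$, the parameter $l_\sigma - 1 \geq 0$ and recognizing that $n_{ij}(\Sigma, l_\Sigma)$ depends on $l_\Sigma$ only through the total lengths of the four sub-chains, which are affine functions of the $l_\sigma$. Concretely, each of the four numbers $n_{11}, n_{12}, n_{21}, n_{22}$ is a fixed base length (coming from the skeleton $\Sigma$) plus a sum of $(l_\sigma - 1)$ over the bridges $\sigma$ contributing to that sub-chain; since $\Sigma$ has no parallel bridges, every bridge contributes to at most \emph{two} of the four chains (the two sub-chains adjacent to its two edges). First I would use Lemma $3.2(i)$, $|a_n(t)| \leq C t^n/n!$, to bound the product of four $|a_{n_{ij}}|$ by $C^4 t^{n_{11}+n_{12}+n_{21}+n_{22}}/(n_{11}!\, n_{12}!\, n_{21}!\, n_{22}!)$. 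Since $T \leq T_0$ and $t = W^{d\kappa}T$, and since $M \sim W^d$, the power of $t$ is controlled; the key gain comes from the factorials in the denominator.

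Next I would split the bridges of $\Sigma$ into those that "touch" a rooted directed chain (i.e.\ are incident to a root or summit vertex, or more precisely lie in the first or last position of some sub-chain) and the remaining "bulk" bridges. The number of touching bridges is bounded by a constant (at most $4$, one at each of the four chain-endpoints that is a white vertex, cf.\ Lemma $2.5$), so summing the $l_\sigma$ for those over $\{1,\dots\}$ with the constraint $|l_\Sigma| \leq M^\mu$ contributes at most a factor $M^{\mu \cdot O(1)}$; this is where the exponent $\mu(|\Sigma|-2)$ rather than $\mu|\Sigma|$ ultimately comes from — two of the bridges are "spent" on endpoints. For the remaining $|\Sigma| - O(1)$ bulk bridges, each increment $l_\sigma - 1 = m_\sigma \geq 0$ increases two of the $n_{ij}$ by $m_\sigma$, hence increases $n_{11}+n_{12}+n_{21}+n_{22}$ by $2m_\sigma$; using the multinomial-type bound $\sum_{\mathbf{m}} \prod_\sigma (\text{stuff})^{m_\sigma}/(\sum m_\sigma)! \leq \dots$ together with $t^{2m_\sigma} \leq (M^{d\kappa/d}\cdot\text{const})^{\dots}$ and $\mu < 1/3$, I would absorb the geometric-type series. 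The cleanest route is: bound each factorial $n_{ij}!$ from below by the factorial of its $l_\Sigma$-dependent part, then sum each independent $l_\sigma$ (subject only to $|l_\Sigma| \leq M^\mu$) to produce, per bulk bridge, a factor $C M^\mu$ divided by a growing integer, the division over all $|\Sigma|-3$ bulk bridges assembling into the $1/(|\Sigma|-3)!$.

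The main obstacle I anticipate is the bookkeeping of exactly how the $l_\sigma$ enter the four chain-lengths and making the factorial-denominator argument rigorous when several bridges feed the \emph{same} sub-chain: one must not double-count the combinatorial gain. The right tool is to write $n_{ij}(\Sigma,l_\Sigma) = \bar n_{ij} + \sum_{\sigma \in B_{ij}} (l_\sigma - 1)$ with $B_{ij}$ the set of bridges crossed by sub-chain $(i,j)$, note $\sum_{i,j}|B_{ij} \cap \{\sigma\}| \leq 2$ for every $\sigma$, and then use the elementary inequality $\prod_{i,j} \frac{1}{n_{ij}!} \leq \frac{C^{|\Sigma|}}{(|l_\Sigma| - |l_\Sigma^{\mathrm{touch}}|)!} \prod (\dots)$ obtained by iterating $\frac{1}{(a+b)!} \leq \frac{2^{a+b}}{a!\,b!}$. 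Once that reduction is in place, the constraint $|l_\Sigma| \leq M^\mu$ turns the sum over bulk increments into $\sum_{|\mathbf{m}| \leq M^\mu} \frac{(Ct^2)^{|\mathbf{m}|/(\text{something})}}{|\mathbf{m}|!}$-type expressions which are bounded by $C^{|\Sigma|} M^{\mu(|\Sigma|-2)}/(|\Sigma|-3)!$ after using $\kappa < 1/3$ to guarantee $t^2 \lesssim M^{2d\kappa/d} = M^{2\kappa} \ll M^{2/3}$ and hence the series converges with room to spare. I would present the touching-bridge count and the affine structure of $n_{ij}$ as two preliminary observations, then carry out the factorial estimate and the constrained summation in that order.
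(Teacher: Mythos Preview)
Your approach has a genuine gap: Lemma~3.2\,(i), the pointwise bound $|a_n(t)|\leq Ct^n/n!$, cannot produce the stated $t$-independent estimate in the regime $|l_\Sigma|\leq M^\mu$. Take $|\Sigma|=3$ and the single term with all $l_\sigma=1$. Then $n_{11}+n_{12}+n_{21}+n_{22}=2|l_\Sigma|=6$, each $n_{ij}$ is a fixed nonnegative integer depending only on $\Sigma$, and your bound gives $\prod_{i,j}|a_{n_{ij}}(t)|\leq C\,t^{6}/\prod_{i,j}n_{ij}!$, which is of order $t^{6}\sim M^{6\kappa}$. For $\kappa$ near $1/3$ this is nearly $M^{2}$, far exceeding the target $CM^{\mu}\leq CM^{1/3}$. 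More generally, whenever $|l_\Sigma|\ll t$ the factorials are negligible against $t^{2|l_\Sigma|}$; Lemma~3.2\,(i) is the right tool only in the complementary regime $|l_\Sigma|>M^{\mu}$ (the $E^{>}$ part of the argument), not here. The factorial manipulations you sketch cannot repair this, since even a single summand already violates the claimed bound.

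The paper's proof uses Lemma~3.2\,(ii), the $\ell^{2}$ identity $\sum_{n\geq0}|a_n(t)|^{2}=1+O(M^{-1})$ uniformly in $t$, and the bridges it singles out are chosen for a different reason than your ``touching'' bridges. One picks $\sigma_1\in\Sigma$ with one edge in $\mathcal{S}_1=r(\mathcal{L}_1)\to s(\mathcal{L}_1)$ and the other in $\mathcal{S}_2=s(\mathcal{L}_1)\to r(\mathcal{L}_1)$, so that $n_{11}=f_1(l_{\bar\Sigma})+l_{\sigma_1}$ and $n_{12}=f_2(l_{\bar\Sigma})+l_{\sigma_1}$ carry the \emph{same} shift $l_{\sigma_1}$; likewise a bridge $\sigma_2$ linking $\mathcal{S}_3$ and $\mathcal{S}_4$ (if no such $\sigma_2$ exists one spends a third bridge, which only helps). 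After the elementary inequality $|abcd|\leq |a|^{2}|c|^{2}+|b|^{2}|d|^{2}$, the sums $\sum_{l_{\sigma_1}}|a_{f_1+l_{\sigma_1}}(t)|^{2}$ and $\sum_{l_{\sigma_2}}|a_{f_3+l_{\sigma_2}}(t)|^{2}$ each cost only a constant by Lemma~3.2\,(ii). What remains is a pure volume count over the other $|\Sigma|-2$ variables, $\sum_{|l_{\bar\Sigma}|\leq M^{\mu}}1\leq C\,M^{\mu(|\Sigma|-2)}/(|\Sigma|-3)!$, which is where the factor $(|\Sigma|-3)!$ actually comes from.
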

\begin{figure}
\begin{center}
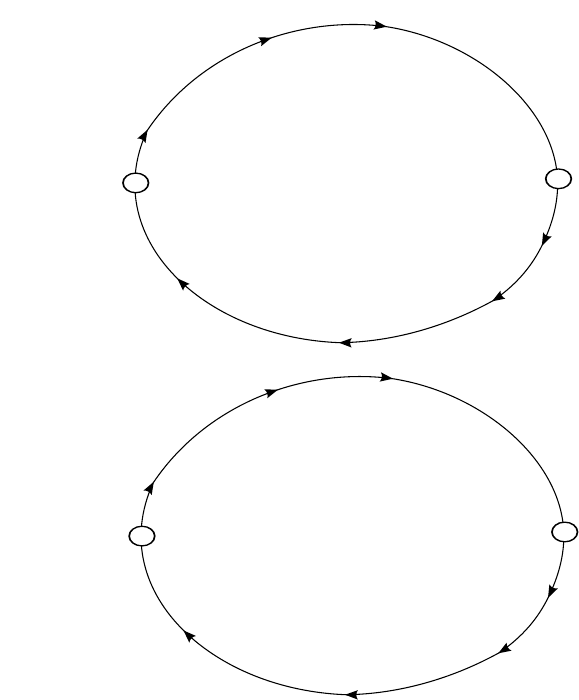
\caption{The directed paths $\cal{S}_1, \cal{S}_2, \cal{S}_3$ and $\cal{S}_4$\,.}
\end{center}
\end{figure}
\begin{proof}
Let $\Sigma \in \mathfrak{G}$\,. We denote each path by $\cal{S}_1 \equiv r(\mathcal{L}_1(\Sigma))\rightarrow s(\mathcal{L}_1(\Sigma))$, $\cal{S}_2 \equiv s(\mathcal{L}_1(\Sigma))\rightarrow r(\mathcal{L}_1(\Sigma))$, $\cal{S}_3 \equiv r(\mathcal{L}_2(\Sigma))\rightarrow s(\mathcal{L}_2(\Sigma))$ and $\cal{S}_4 \equiv s(\mathcal{L}_2(\Sigma))\rightarrow r(\mathcal{L}_2(\Sigma))$\,.
 There always exists a bridge connecting $\cal{S}_{i}$ and $\cal{S}_{j}$ , for $i \neq j\,.$ Without loss of generality we choose $\sigma_1$ connecting $\cal{S}_1$ and $\cal{S}_2$\,.\\ 
We have the following cases\,:\\
$(i)$ There is a bridge $\sigma_2 \in \Sigma$ between $\cal{S}_3$ and $\cal{S}_4$\,.\\
Let $\bar{\Sigma}\; :=\; \Sigma \setminus \{\sigma_1, \sigma_3 \}$\,. There exist the functions $f_1(l_{\bar{\Sigma}})$, $f_2(l_{\bar{\Sigma}})$, $f_3(l_{\bar{\Sigma}})$ and $f_4(l_{\bar{\Sigma}})$ such that $n_{11}(\Sigma, l_{\Sigma})=f_1(l_{\bar{\Sigma}})+l_{\sigma_1}$ and $ n_{12}(\Sigma, l_{\Sigma})=f_2(l_{\bar{\Sigma}})+l_{\sigma_1}$, $n_{21}(\Sigma, l_{\Sigma})=f_3(l_{\bar{\Sigma}})+l_{\sigma_2}$ and $ n_{22}(\Sigma, l_{\Sigma})=f_4(l_{\bar{\Sigma}})+l_{\sigma_2}$\,. Note that $n_{11}(\Sigma, l_{\Sigma})$ and $n_{21}(\Sigma, l_{\Sigma})$ do not represent the same linear combination of elements of $l_\Sigma$\,.\\
We get
\begin{multline*}
\sum\limits_{l_{\Sigma}}\bold{1}(|l_{\Sigma}|\leq M^{\mu})|a_{n_{11}(\Sigma, l_{\Sigma})}(t)\overline{a_{n_{12}(\Sigma, l_{\Sigma})}(t)}a_{n_{21}(\Sigma, l_{\Sigma})}(t)\overline{a_{n_{22}(\Sigma, l_{\Sigma})}(t)}|\\
\;=\;\sum\limits_{l_{\bar{\Sigma}}}\sum\limits_{l_{\sigma_1}}\sum\limits_{l_{\sigma_2}}\bold{1}(|l_{\Sigma}|\leq M^{\mu})|a_{f_1(l_{\bar{\Sigma}})+l_{\sigma_1}}(t)\overline{a_{f_2(l_{\bar{\Sigma}})+l_{\sigma_1}}(t)}a_{f_3(l_{\bar{\Sigma}})+l_{\sigma_2}}(t)\overline{a_{f_4(l_{\bar{\Sigma}})+l_{\sigma_2}}(t)}|\,.
\end{multline*}
Using the elementary inequality $|abcd| \leq |a|^{2}|c|^{2}+|b|^{2}|d|^{2}$ we obtain
\begin{multline}
\sum\limits_{l_{\bar{\Sigma}}}\sum\limits_{l_{\sigma_1}}\sum\limits_{l_{\sigma_2}}\bold{1}(|l_{\Sigma}|\leq M^{\mu})|a_{f_1(l_{\bar{\Sigma}})+l_{\sigma_1}}(t)\overline{a_{f_2(l_{\bar{\Sigma}})+l_{\sigma_1}}(t)}a_{f_3(l_{\bar{\Sigma}})+l_{\sigma_2}}(t)\overline{a_{f_4(l_{\bar{\Sigma}})+l_{\sigma_2}}(t)}| \\
\;\leq\; \sum\limits_{l_{\bar{\Sigma}}}\sum\limits_{l_{\sigma_1}}\sum\limits_{l_{\sigma_2}} \bold{1}(|l_{\Sigma}|\leq M^{\mu})( |a_{f_1(l_{\bar{\Sigma}})+l_{\sigma_1}}(t)|^{2}|a_{f_3(l_{\bar{\Sigma}})+l_{\sigma_2}}(t)|^2+|a_{f_2(l_{\bar{\Sigma}})+l_{\sigma_1}}(t)|^2|a_{f_4(l_{\bar{\Sigma}})+l_{\sigma_2}}(t)|^2)\,.
\end{multline}
Using the inequality between the indicator functions $\sum\limits_{\l_{\bar{\Sigma}}}\bold{1}(|l_{\Sigma}|\leq M^{\mu})\leq \sum\limits_{\l_{\bar{\Sigma}}}\bold{1}(|l_{\bar{\Sigma}}|\leq M^{\mu})$ and Lemma $3.2$ $(ii)$ we obtain that
\begin{align*}
\sum\limits_{\l_{\bar{\Sigma}}}\sum\limits_{l_{\sigma_1}}\sum\limits_{l_{\sigma_2}}\bold{1}(|l_{\Sigma}|\leq M^{\mu})|a_{f_1(l_{\bar{\Sigma}})+l_{\sigma_1}}(t)|^{2}|a_{f_3(l_{\bar{\Sigma}})+l_{\sigma_2}}(t)|^2&\;=\;\sum\limits_{{l_{\bar{\Sigma}}}}\bold{1}(|l_{\Sigma}|\leq M^{\mu})\sum\limits_{\l_{\sigma_1}}|a_{f_1(l_{\bar{\Sigma}})+l_{\sigma_1}}(t)|^{2} \sum\limits_{\l_{\sigma_2}}|a_{f_3(l_{\bar{\Sigma}})+l_{\sigma_2}}(t)|^2\nonumber\\
&\;\leq\; \sum\limits_{{l_{\bar{\Sigma}}}} \bold{1}(|l_{\bar{\Sigma}}|\leq M^{\mu})2\,.
\end{align*} 
\noindent
Note that the same argument holds for $|a_{f_2(l_{\bar{\Sigma}})+l_{\sigma_1}}(t)|^2|a_{f_4(l_{\bar{\Sigma}})+l_{\sigma_2}}(t)|^2$ \,.\\ 
Using that $|\bar{\Sigma}|=|\Sigma|-2$ we obtain that 
\begin{align}
\sum\limits_{l_1+l_2+\ldots{} +l_{|\bar{\Sigma}|}\leq M^{\mu}}2 &\;\leq\; \sum\limits_{l_2+\ldots{}+l_{|\bar{\Sigma}|}=M^{\mu}-l_1}2\nonumber\\
&\;\leq\; \sum\limits_{l_1=1}^{M^{\mu}}2{M^{\mu}-l_{1}-1 \choose{|\Sigma|-3}}\nonumber\\
&\;\leq\; C\frac{M^{\mu(|\Sigma|-2)}}{(|\Sigma|-3)!}\,.
\end{align}
$(ii)$ There is no bridge connecting $\cal{S}_3$ and $\cal{S}_4$\,.
In this case we consider two bridges $\sigma_3$ and $\sigma_4$ that are touching $\cal{S}_3$ and $\cal{S}_4$ respectively. We further define $\bar{\Sigma}\;:=\;\Sigma \setminus \{ \sigma_1, \sigma_3, \sigma_4 \}$\,.\\
We have that
\begin{multline*}
\sum\limits_{l_{{\Sigma}}}\bold{1}(|l_{\Sigma}|\leq M^{\mu})|a_{n_{11}(\Sigma, l_{\Sigma})}(t)\overline{a_{n_{12}(\Sigma, l_{\Sigma})}(t)}a_{n_{21}(\Sigma, l_{\Sigma})}(t)\overline{a_{n_{22}(\Sigma, l_{\Sigma})}(t)}|\\
\;=\;\sum\limits_{l_{\bar{\Sigma}}}\sum\limits_{l_{\sigma_1}}\sum\limits_{l_{\sigma_3}}\sum\limits_{l_{\sigma4}}\bold{1}(|l_{\Sigma}|\leq M^{\mu})|  a_{f_1(\bar{l},l_{\sigma_3},l_{\sigma_4})+l_{\sigma_1}}(t)\overline{a_{f_2(\bar{l},l_{\sigma_3},l_{\sigma_4})+l_{\sigma_1}}(t)}a_{f_3(\bar{l})+\eta_3 l_{\sigma_3}}(t)\overline{a_{f_4(\bar{l})+\eta_4 l_{\sigma_4}}(t)}|\,,
\end{multline*}
where $\eta_3\,, \eta_4 \in \{1, 2\}\,.$
Using the same inequality as in $(3.1)$ and that $l_{\bar{\sigma}}$ and $\l_{\sigma_1}$ are distinct we obtain that 
\begin{multline}
\sum\limits_{l_{\bar{\Sigma}}}\sum\limits_{l_{\sigma_1}}\sum\limits_{l_{\sigma_3}}\bold{1}(|l_{\Sigma}|\leq M^{\mu})|  a_{f_1(\bar{l},l_{\sigma_3},l_{\sigma_4})+l_{\sigma_1}}(t)|^{2}|a_{f_3(\bar{l})+\eta_3 l_{\sigma_3}}(t)|^{2}
\\
\;\leq\;
\sum\limits_{l_{\bar{\Sigma}}}\sum\limits_{l_{\sigma_1}}\bold{1}(|l_{\Sigma}|\leq M^{\mu})|  a_{f_1(\bar{l},l_{\sigma_3},l_{\sigma_4})+l_{\sigma_1}}(t)|^{2}\sum\limits_{l_{\sigma_3}}|a_{f_3(\bar{l})+\eta_3 l_{\sigma_3}}(t)|^{2}\,.
\end{multline}
The same holds for $\overline{a_{f_2(\bar{l},l_{\sigma_3},l_{\sigma_4})+l_{\sigma_1}}(t)}$ and $\overline{a_{f_4(\bar{l})+\eta_4 l_{\sigma_4}}(t)}$\,.
\\
Now the claim follows like in $(i)$\,.

\end{proof}

\noindent
Let $m=|\Sigma|$\,. Using Lemma $3.3$ and the same reasoning as in Section $7.6$ of \cite{1} we obtain
\begin{align}
E^{\leq} &\;\leq\; C \sum\limits_{m \leq M^{\mu}}\frac{M^{\mu(|\Sigma|-2)}}{(|\Sigma|-3)!}\frac{M^{2/3}}{M^{|\Sigma|/3}}\left(\frac{M}{M-1}\right)^{M^{\mu}}\nonumber\\
\end{align}
It is easy to see, as in Section $7.6$ of \cite{1}\,, that $|\{\Sigma : |\Sigma|=m\}|\;\leq\; 2^{m}m!$\,.\\
Finally, we obtain that 
\begin{align}
E^{\leq}&\;\leq\;C \sum\limits_{m=3}^{M^{\mu}}2^{m}m!\frac{M^{\mu(|\Sigma|-2)}}{(|\Sigma|-3)!}\frac{M^{2/3}}{M^{|\Sigma|/3}}\left(\frac{M}{M-1}\right)^{M^{\mu}}\nonumber\\
&\;\leq\; CM^{\mu(|\Sigma|-2)}M^{2/3-|\Sigma|/{3}}\nonumber\\
&\;\leq\;CM^{\mu-1/3}\,. 
\end{align}

\end{subsection}

\begin{subsection}{Estimation of the variance for $|l_{\Sigma}| \geq M^{1/3}$}

Let us define

\begin{equation}
E^{>} \;\deq\; \sum\limits_{\Sigma \in \mathfrak{G}}\sum\limits_{|l_{\Sigma}| \geq M^{\mu}}|a_{n_{11}(\Sigma, l_{\Sigma})}(t)\overline{a_{n_{12}(\Sigma, l_{\Sigma})}(t)}a_{n_{21}(\Sigma, l_{\Sigma})}(t)\overline{a_{n_{22}(\Sigma, l_{\Sigma})}(t)}|R(\Sigma)\,.
\end{equation}
We also define  the new set of variables\,
\begin{align} 
p_1\equiv p_1(\Sigma, l_{\Sigma})&\;=\;\frac{n_{11}(\Sigma, l_{\Sigma})+n_{12}(\Sigma, l_{\Sigma})}{2}\,, 
&p_2 \equiv p_2(\Sigma, l_{\Sigma})\;=\;\frac{n_{21}(\Sigma, l_{\Sigma})+n_{22}(\Sigma, l_{\Sigma})}{2}\,;\\
q_1 \equiv q_1(\Sigma, l_{\Sigma})&\;=\;\frac{n_{11}(\Sigma, l_{\Sigma})-n_{12}(\Sigma, l_{\Sigma})}{2}\,,
&q_2 \equiv q_2(\Sigma, l_{\Sigma})\;=\;\frac{n_{21}(\Sigma, l_{\Sigma})-n_{22}(\Sigma, l_{\Sigma})}{2}\,;
\end{align}
Note that $p_1+p_2=|l_{\Sigma}|$\,.\\
As in \cite{1}, using Lemma 3.2 $(i)$  and the inequality  $ \frac{p!}{(p-q)!} \leq \frac{(p+q)!}{p!}$ we obtain 
\begin{align}
 |a_{p_1+q_1}(t)a_{p_1-q_1}(t)a_{p_2+q_2}(t)a_{p_2-q_2}(t)|\;\leq\; C\frac{t^{2(p_1+p_2)}}{p_1!p_1!p_2!p_2!}\,.
\end{align}
Using the time scale $t \sim CM^{\kappa}$ we obtain that 
\begin{equation}
E^{>}\;\leq\;\sum\limits_{|l_{\Sigma}|>M^{\mu}}\frac{(CM^{\kappa}T)^{2(p_1+p_2)}}{p_1!p_1!p_2!p_2!}M^{2/3}\sum\limits_{m}^{p_1+p_2}\left(\frac{C(p_1+p_2)}{M^{1/3}}\right)^{m}\,.
\end{equation}
As in Section $7.7$ of \cite{1}, using that $C p_1!p_1!p_2!p_2!\geq p_1^{2p_1}p_2^{2p_2}$\,, for some constant $C$, we obtain that
\begin{align}
E^{>} &\;\leq\; \sum\limits_{|l_{\Sigma}|>M^{\mu}}M^{2/3}\left(\frac{CM^{\kappa}T}{p_1+p_2}\right)^{2(p_1+p_2)} +\sum\limits_{|l_{\Sigma}|>M^{\mu}} M^{1/3}\left(\frac{CM^{2\kappa}T^{2}}{(p_1+p_2)M^{1/3}}\right)^{p_1+p_2}\nonumber\\
&\;\leq\; \sum\limits_{|l_{\Sigma}|>M^{\mu}}M^{2/3}\left(CM^{\kappa-\mu}T\right)^{2(p_1+p_2)} +\sum\limits_{|l_{\Sigma}|>M^{\mu}} M^{1/3}\left(CM^{2\kappa-1/3-\mu}T^{2}\right)^{p_1+p_2}\nonumber\\
&\;\leq\;(CM^{\kappa-\mu+1/3M^{\mu}}T)^{2M^{\mu}}+(CM^{2\kappa-1/3-\mu + 2/3M^{\mu}}T)^{M^{\mu}}\,.
\end{align}
Choosing $\mu\;=\;1/3-\beta$ with the condition $1/3-\kappa>\mu-\kappa \;>\; 1/3M^{\mu}\;>\;1/3M^{1/3}$ (where we have $0 \;<\;\beta\;<\; 2/3-2\kappa-2/3M^{\mu} \;\leq\; 2/3-2\kappa$) completes the proof of Theorem $1$ in the case $|\Sigma|\;\geq\; 3\,.$

\begin{section}{Estimation for the variance in the case $|\Sigma|\;\leq\;2$}

\begin{subsection}{ Estimation for the variance in the case $|\Sigma|\;=\;0$ and $|\Sigma|\;=\;1$}
For $|\Sigma|=0$ we obtain that $\langle H_{00} ; H_{00}\rangle$ vanishes. Also, for $|\Sigma|=1$ we estimate term of the form
\begin{align*}
\langle\delta_{0y_1}H_{0y_1};\delta_{0y_2}H_{0y_2} \rangle\;=\;\delta_{0y_1}\delta_{0y_2}\langle H_{00};H_{00} \rangle\,.
\end{align*}
Given that $\langle H_{00}; H_{00} \rangle=0$ it follows that in the cases $|\Sigma|=0$ and $|\Sigma|=1$ the quantity of interest  is deterministic.
\begin{subsection}{ Estimation of the variance in the case $|\Sigma|\;=\;2$}
Given that the two rooted directed chains are connected we obtain that the graph with $l_{\sigma_1}$ bridges that touch $\cal{S}_1$ and $\cal{S}_2$  and $l_{\sigma_2}$ bridges that touch $\cal{S}_3$ and $\cal{S}_4$ gives no contribution to the value of the variance. Also, we obtain, up to permutations, four different possible configurations. In all four cases it holds that $y_1=y_2$\,.\\ 
We have that
\begin{multline}
\sum\limits_{l_{\sigma_1}+l_{\sigma_2}}|a_{n_{11}(\Sigma, l_{\Sigma})}(t)|^{2}|a_{n_{12}(\Sigma, l_{\Sigma})}(t)|^{2}R(\Sigma)
\\
=\;\sum\limits_{l_{\sigma_1}+l_{\sigma_2} \leq M^{\mu}}|a_{n_{11}(\Sigma, l_{\Sigma})}(t)|^{2}|a_{n_{12}(\Sigma, l_{\Sigma})}(t)|^{2}R(\Sigma)+\sum\limits_{l_{\sigma_1}+l_{\sigma_2} \geq M^{\mu}}|a_{n_{11}(\Sigma, l_{\Sigma})}(t)|^{2}|a_{n_{12}(\Sigma, l_{\Sigma})}(t)|^{2}R(\Sigma)\,.
\end{multline}
\noindent
Using Lemma $3.2$ $(ii)$ twice and Lemma $2.6$ $(ii)$ and $(i)$ we obtain that
\begin{align}
\sum\limits_{l_{\sigma_1}+l_{\sigma_2} \leq M^{\mu}}|a_{n_{11}(\Sigma, l_{\Sigma})}(t)|^{2}|a_{n_{12}(\Sigma, l_{\Sigma})}(t)|^{2}R(\Sigma)
&\;\leq\;2C \sum\limits_{l_{\sigma_1}+l_{\sigma_2}\leq M^{\mu}}\sum\limits_{\bold{x} \in \Lambda_N^{V(\Sigma)}}(S^{l_{\sigma_1}})_{x_{e}}(S^{l_{\sigma_2}})_{x_{e}}\nonumber\\
&\;\leq\; 2C\frac{1}{M-1}\left(\frac{M}{M-1}\right)^{l_{\sigma_1}-1}\sum\limits_{l_{\sigma_1}+l_{\sigma_2} \leq M^{\mu}}\sum\limits_{\bold{x} \in \Lambda_N^{V(\Sigma)}}(S^{l_{\sigma_2}})_{x_{e}}\nonumber\\
&\;\leq\;2C \frac{M^{\mu}}{M-1}\,.
\end{align}
Using again the time scale $t \sim CM^{\kappa}$ we obtain that 
\begin{multline}
\sum\limits_{l_{\sigma_1}+l_{\sigma_2} \geq M^{\mu}}|a_{n_{11}(\Sigma, l_{\Sigma})}(t)|^{2}|a_{n_{12}(\Sigma, l_{\Sigma})}(t)|^{2}R(\Sigma)
\\
\;\leq\; \sum\limits_{l_{\sigma_1}+l_{\sigma_2} \geq M^{\mu}}M^{2/3}\left(\frac{CM^{\kappa}T}{l_{\sigma_1}+l_{\sigma_2}}\right)^{2(l_{\sigma_1}+l_{\sigma_2})} +\sum\limits_{l_{\sigma_1}+l_{\sigma_2} \geq M^{\mu}} M^{1/3}\left(\frac{CM^{2\kappa}T^{2}}{(l_{\sigma_1}+l_{\sigma_2})M^{1/3}}\right)^{l_{\sigma_1}+l_{\sigma_2}}\nonumber\,.\\
\end{multline}
As in $(3.11)$ we obtain that
\begin{equation}
\sum\limits_{l_{\sigma_1}+l_{\sigma_2} \geq M^{\mu}}|a_{n_{11}(\Sigma, l_{\Sigma})}(t)|^{2}|a_{n_{12}(\Sigma, l_{\Sigma})}(t)|^{2}R(\Sigma)\;\leq\;(CM^{\kappa-\mu+1/3M^{\mu}}T)^{2M^{\mu}}+(CM^{2\kappa-1/3-\mu + 2/3M^{\mu}}T)^{M^{\mu}}\,.
\end{equation}
As before, we choose $\mu\;=\;1/3-\beta$\,. This completes the proof of Theorem $1$\,.
\end{subsection}
\end{subsection}
\end{section}

\end{subsection}
\end{section}

\end{subsection}

\end{section}

\newpage 

\bibliographystyle{plain}

\end{document}